\documentclass[journal,10pt,twoside,final]{IEEEtran}
\IEEEoverridecommandlockouts                              
\usepackage[table]{xcolor}
\usepackage{graphics} 
\usepackage{epsfig} 
\usepackage{amsmath} 
\usepackage{amssymb}  
\usepackage{marvosym}
\usepackage{url}
\usepackage{booktabs}
\usepackage{subfigure}

\usepackage{algorithmic}
\usepackage{algorithm}

\usepackage{amsthm}

\newtheorem{definition}{Definition}
\newtheorem{proposition}{Proposition}
\newtheorem{corollary}{Corollary}

\title{An Energy Efficient Ethernet Strategy Based on Traffic Prediction and Shaping}

\author{Angelo Cenedese~\IEEEmembership{Member,~IEEE}, Marco Michielan, Federico Tramarin~\IEEEmembership{Member,~IEEE}, Stefano Vitturi~\IEEEmembership{Member,~IEEE}
\thanks{A.~\!Cenedese is with the Department of Information Engineering (DEI), M.~\!Michielan is with the Human Inspired Technologies (HIT) Research Center, both of the University of Padova, Italy. F.~\!Tramarin and S.~\!Vitturi are with the  Institute of Electronics, Computer and Telecommunication Engineering  (IEIIT), National Research Council, Italy.}%
\thanks{Authors' electronic addresses (\Letter): \newline
{\tt angelo.cenedese@unipd.it} \newline
{\tt marco.michielan.1@studenti.unipd.it} \newline
{\tt \{tramarin,vitturi\}@dei.unipd.it}
}}

\markboth{Submitted to IEEE Trans Comm: Sys,~Vol.~..., No.~..., ...}%
{Cenedese \MakeLowercase{\textit{et al.}}: ...}

\begin{document}

\maketitle

\begin{abstract}

Recently, different communities in computer science, telecommunication and control systems have devoted a huge effort towards the design of energy efficient solutions for data transmission and network management. This paper collocates along this research line and presents a novel energy efficient strategy conceived for Ethernet networks. 
%
The proposed strategy combines the statistical properties of the network traffic with the opportunities offered by the IEEE 802.3az amendment to the Ethernet standard, called Energy Efficient Ethernet (EEE). This strategy exploits the possibility of predicting the incoming traffic from the analysis of the current data flow, which typically presents a self-similar behavior. Based on the prediction, Ethernet links can then be put in a low power consumption state for the intervals of time in which traffic is expected to be of low intensity. 
Theoretical bounds are derived that detail how the performance figures depend on the parameters of the designed strategy and scale with respect to the traffic load.
Furthermore, simulations results, based on both real and synthetic traffic traces, are presented to prove the effectiveness of the strategy, which leads to considerable energy savings at the cost of only a limited bounded delay in data delivery.
\end{abstract}

\begin{IEEEkeywords}
Ethernet networks, Energy Efficient Ethernet, Communication system traffic control, Prediction algorithms.
\end{IEEEkeywords}

\IEEEpeerreviewmaketitle

\section{Introduction}
\label{intro}
\IEEEPARstart{N}{owadays,} data networks are pervasive in everyday life and Ethernet \cite{ether}, no longer limited to the office context, is ever more used in several fields of application. Consequently, the amount of data circulating in Ethernet networks is dramatically increasing due to the ever growing number of connections among users, the massive sharing of multimedia data and the huge distribution of devices.
However, due to the basically random nature of the traffic, these networks are typically in an always active state, and given the high number of nodes they connect, this often results in a waste of energy and inefficiency. Indeed, also when there is no data to transmit (to this regard, it is worth observing that Ethernet links are often strongly under-utilized \cite{bolla_jsac}), the energy consumption per Ethernet link is considerable, typically around $1$ W for the 1000BASE-T Ethernet physical layer and over $5$ W for the 10GBASE-T one, as can be derived from manufacturers technical data such as, for example, those reported in \cite{cisco}, \cite{intel}. There is thus the need to introduce adequate policies in the context of traffic and/or of device control within a network that eventually result in energy savings.

In the past years, actually, several efforts have been carried out towards the design of energy efficient solutions for communication systems and Ethernet in particular ~\cite{GuptaSingh03}, ~\cite{GunaratneChristensenNordman08}.
These efforts have led to the publication of the IEEE 802.3az amendment to the original standard, known as Energy Efficient Ethernet (EEE)~\cite{eee}. IEEE 802.3az, basically, introduces a new operational mode for Ethernet, namely Low Power Idle (LPI), which allows links not involved in data transmission to enter a low consumption state, called \emph{quiet state}~\cite{ChristensenReviriegoNordman10}. Such an amendment, however, deliberately does not describe specific energy efficiency strategies (i.e. it does not specify when the links have to enter/exit the quiet state) that, conversely, are left to the specific manufacturer implementations. As a consequence, several strategies have been proposed in the literature and, actually, some of them are currently implemented by commercially available devices, as described in next Section. 

As a matter of fact, Ethernet traffic is characterized by a high level of burstiness and variability and is statistically self-similar~\cite{LelandTaqquWillinger94}, meaning that at different scales it tends to replicate a same pattern. 
Hence, to model Ethernet traffic, long-range dependence, heavy-tailed distributions (e.g. Pareto) are employed, which lead to fractal behaviors.
Such a long-range dependence can be profitably exploited to design effective EEE strategies. Indeed, the states of the links can be activated/deactivated in agreement with the traffic prediction. 

In this context, the main contribution of this paper is twofold:
\begin{itemize}
\item on the one side, it is presented the design of an innovative EEE strategy that exploits the statistical properties of the self-similar traffic to gain prediction of the data flow, and further improve the energy savings coming from the traditional techniques used by EEE; this strategy will be named as \emph{EEE with prediction}, EEEP;
\item on the other side, the theoretical performance bounds for the energy efficient strategies EEE and EEEP are obtained; these bounds are also assessed by means of simulations that employ both real and artificially synthesized traffic traces.
\end{itemize}

The rest of the paper is organized as follows: \S\ref{sec:relatedWork} introduces some related work concerned with both self-similar traffic and Energy Efficient Ethernet; in \S\ref{sec:preliminaries} some of the basic mathematical preliminaries on traffic modeling used in this paper are presented; \S\ref{sec:selfSimilarTrafficShaping}, besides introducing the basics of EEE, describes in details the proposed EEEP strategy. Building on these results, \S\ref{discussion} formalizes the theoretical performance bounds and discusses the expected simulation results, whereas \S\ref{sec:performance-evaluation} evaluates the strategy performance figures using both synthetic and real data transmission traces. \S\ref{sec:conclusions}, finally, reports the conclusions.

\section{Related Work}
\label{sec:relatedWork}

In the literature, the contributions on the self-similar features of network traffic relate to some categories. The first one specifically refers to measurement-based traffic modeling, as described for example in \cite{CrovellaBestavros97}, and takes into  consideration traffic traces from physical networks to identify and quantify their pertinent characteristics. A second category is represented by physical modeling \cite{WillingerTaqquSherman97}, which is meant to derive models of network traffic. Queueing analysis \cite{DuffieldOconnell95} is a further category that allows to obtain a statistic characterization of some important traffic parameters. The final category is represented by traffic control, which can be implemented via open loop as well as closed loop techniques \cite{TuanPark99}.

Moving to EEE, many contributions have appeared in the literature during the past years. In \cite{ChristensenReviriegoNordman10} the authors introduce one of the most popular EEE strategies, namely \emph{frame transmission}. Such a strategy, basically, specifies that a link is always maintained in quiet state and activated just for the time necessary to transmit a frame. Moreover, the authors provide a throughout description of EEE along with an interesting analysis that addresses some macro economic aspects related to the expected power savings deriving from the large scale adoption of EEE. In \cite{ReviriegoMaestroHernandez10}, a further strategy, namely \emph{burst transmission}, is proposed as an alternative to frame transmission: in this case, the actual data transmission does not occur at the arrival of any single frame but, conversely, happens either after the number of queued data packets has overcome a given threshold or after a time-out has expired, which yields a further improvement of the EEE 
performance at the cost of an extra delay in the data transmission. For a deeper insight on the modeling of burst transmission within the EEE strategy, the reader is referred also to \cite{MengRenJiang13}. 

Further effective EEE strategies are proposed in both \cite{HerreriaRodriguezFernandez12} and \cite{reviriego_4}. The former paper, actually, is concerned with the adoption of sleeping algorithms, whereas the second one describes a technique to mitigate the delays that could affect packet delivery when EEE is used. Also, preliminary EEE performance figures concerned with energy consumption are provided in both \cite{ReviriegoHernandezLarrabeiti09} and \cite{ReviriegoChristensenRabanillo11}. Particularly, the latter paper presents the results of some practical measurements carried out on off-the-shelf Ethernet Network Interface Cards (NICs).

Finally, two interesting theoretical models concerned with EEE are proposed in \cite{bolla_jsac} and \cite{marsan_cl}. In detail, \cite{bolla_jsac} provides an exhaustive model of a network in which all nodes adopt EEE. The model, which is based on the assumption that the network traffic is that typical of the Internet, allows to calculate power consumption as well as some performance figures of such a kind of networks. The model presented in \cite{marsan_cl}, instead, focuses on the intervals of time spent in the different EEE states by the network links that use frame transmission as EEE strategy, so that the overall power savings can be straightforwardly calculated. 

\section{Preliminaries}
\label{sec:preliminaries}
\subsection{Traffic Modeling}

Let $Y(t)$ be a discrete time \textit{covariance stationary} stochastic process with variance $\sigma ^2$ and autocovariance function $\gamma(k), k\geq 0$. Let $Y^{(a)}(\bar t)$ be the \textit{aggregate process} of $Y$ at time $\bar t$ and aggregation level $a\in\mathbb{N}$, defined as
\begin{equation*}
Y^{(a)}(\bar t) := \frac{1}{a} \sum_{t = a(\bar t-1)+1}^{a\bar t} Y(t)\quad \bar t=1,2,\dots 
\end{equation*}
Clearly, for $a=1$, $\bar t=t$ and $Y^{(1)}(\bar t)=Y(t)$, $\forall t, \bar t \in \mathbb{Z^+}$.

In the context of this paper, both the above processes will be used to model network traffic and will indicate the traffic load on a Ethernet link, expressed in bit/s. 
\begin{definition}[Second-Order Self-Similarity] 
$Y(t)$ is exactly second-order self-similar with Hurst parameter $H$ if
\begin{equation}
\gamma(k) = \frac{\sigma ^2}{2} \left(\left(k+1\right)^{2H} - 2k^{2H} + \left(k-1\right)^{2H}\right) \quad \forall k\geq 1.
\label{eq1}
\end{equation}  
$Y(t)$ is asymptotically second-order self-similar if
\begin{equation}
\lim_{a \to \infty}\gamma^{(a)}(k) = \frac{\sigma ^2}{2} \left(\left(k+1\right)^{2H} - 2k^{2H} + \left(k-1\right)^{2H}\right).
\end{equation}
\end{definition}

\vspace{0.3cm}

Second-order self-similarity captures the property that the correlation structure is exactly or asymptotically preserved under time aggregation. 
Mathematically, the most interesting property of a self-similar process is the \textit{long range dependence} for which autocorrelations decay hyperbolically rather than exponentially fast, implying a non-summable autocorrelation function. 
This nice feature can be exploited for network traffic control purposes: using the correlation structure, the traffic level in a network can be predicted with good precision.

In this context, the value of the Hurst parameter $H$ is particularly meaningful for the network traffic characterization and, indeed, the range of interest for self-similarity is $\frac{1}{2} < H < 1$. 

Among the different techniques that can be used to estimate the degree of self-similarity $H$ for a given process $Y(t)$~\cite{LelandTaqquWillinger94}, in this work the \textit{analysis of the variances} is employed, that considers the variances of the aggregated processes $Y^{(a)}(\bar t)$.
These variances, for large $a$, decrease linearly  in the \textit{variance-time log-log plots} with slope $\beta$ arbitrarily flat in the range $\left[-1,0\right]$: the estimated values of the asymptotic slope $\hat{\beta}$, obtained by least square techniques, provide a good guess for the degree of self-similarity as~\cite{LelandTaqquWillinger94} 
\begin{equation}
\hat{H} = 1 + \frac{\hat{\beta}}{2}.
\label{eq:hatH}
\end{equation}

Self-similar traffic can also be synthetically generated. To this concern, it is necessary to superimpose many ON/OFF sources with strictly alternating ON and OFF intervals whose periods are modeled by heavy-tailed distributions \cite{TaqquWillingerSherman97}.

More formally, let $M$ be the number of independent traffic sources $S_m(t)$, $m \in [1,M]$, each being a binary \textit{reward renewal process} with i.i.d. ON periods and i.i.d. OFF periods. Namely, $S_m(t)$ takes the values 1 (ON periods), meaning that the source has a packet to transmit at time $t$, and 0 (OFF periods), meaning that there is no packet at time $t$. The lengths $C_m^{(i)}, i=1,2,...$ of such ON and OFF periods are obtained from a Pareto distribution, whose distribution function is 
\begin{equation}
{\mathbb P}\left[C_m^{(i)} \leq t\right] = 1 - \left( \frac{b}{t} \right)^{\alpha} \quad  t \geq b,
\label{eq:pareto}
\end{equation}
with $\alpha$ called \emph{tail index} and $b$ \emph{location parameter}.
The $Y_M(t)$ with $t \in \mathbb{Z}$, stochastic process given by the superposition of such traffic sources is self-similar
\begin{equation}
Y_M(t) = \sum_{m=1}^M S_m(t).
\label{eq:sumS}
\end{equation} 
%
For more details on this subject, we refer the interested reader to~\cite{TaqquWillingerSherman97}.

\section{Proposed EEE Strategy}
\label{sec:selfSimilarTrafficShaping}
\subsection{EEE - Energy Efficient Ethernet}

EEE is an amendment to the Ethernet standard described by IEEE 802.3az~\cite{eee} that has the function to reduce power consumption in an Ethernet network~\cite{ChristensenReviriegoNordman10}. 
IEEE 802.3az presents a new operational mode, called \textit{low power idle} (LPI), which allows Ethernet links to enter a new state, namely \emph{quiet state}, characterized by low power consumption with respect to the normal (\emph{active}) state. 
The behavior of an Ethernet link that implements EEE, can be summarized with reference to Fig.~\ref{fig:EEE}.
Assuming that the link between two network nodes is active, when there are no frames to transmit, the link moves to the quiet state in time $t_s$, and reactivates either upon the arrival of any single frame transmission request (\emph{frame transmission strategy}) or after a bulk of requests has been queued (\emph{burst transmission strategy}).
Then the link awakes in time $t_w$ and goes back to the active state, ready for transmission. Furthermore, a periodic refresh signal of duration $t_r$ is triggered to ensure link integrity. 
In this paper, the 1000BASE-T Ethernet physical layer is considered. Thus the transmission rate is $f=1$ Gbit/s and the time spent by EEE strategy for transitions is, at most, equal to $T_{trans} = t_w + t_s$, with $t_w = 0.0165$ ms and $t_s = 0.202$ ms, since those are the maximum values specified by IEEE 802.3az.

\begin{figure}[h]
\centering
\includegraphics[width=0.5\textwidth, trim=1cm 12cm 0cm 0cm]{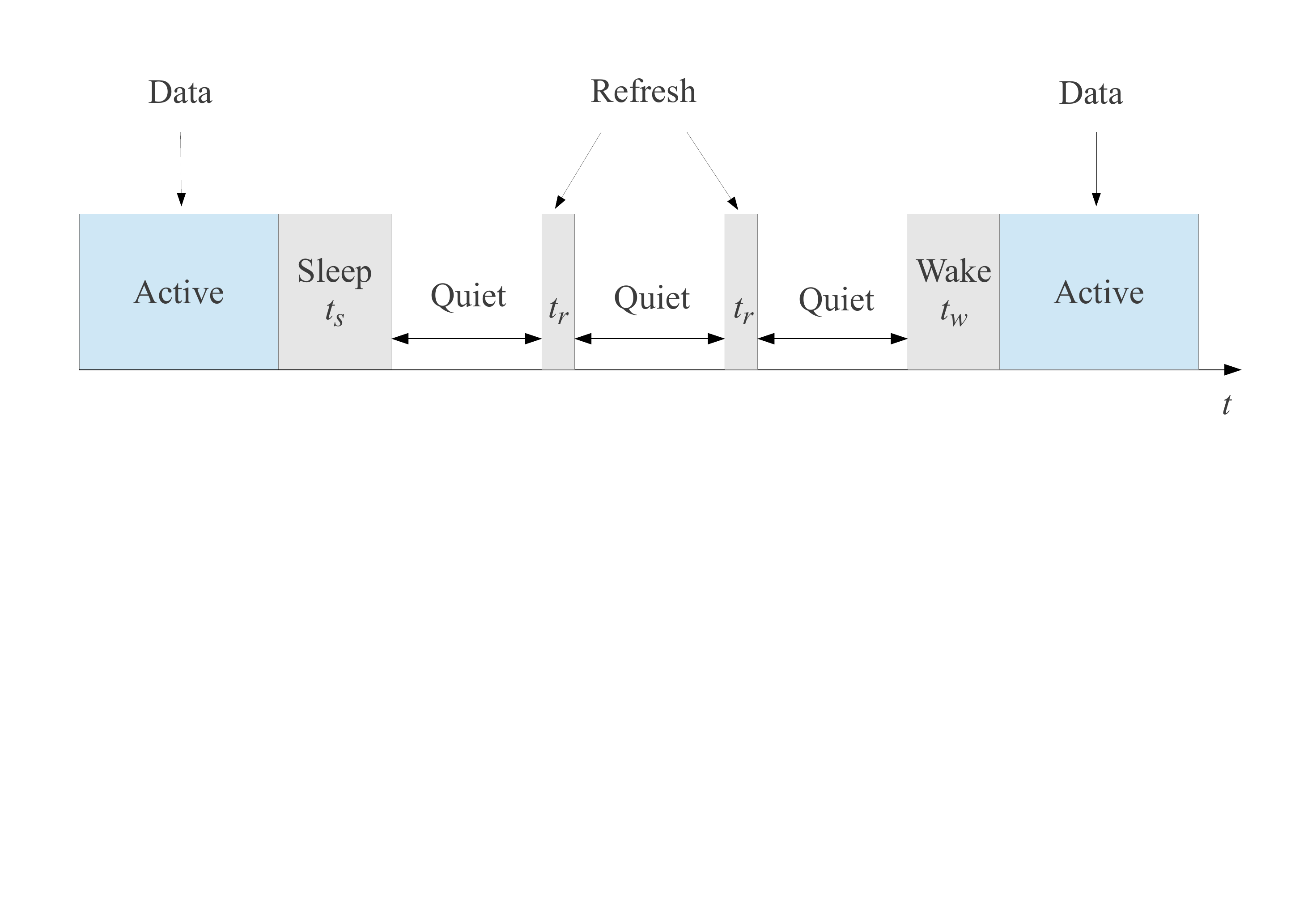}
\caption{EEE Standard. Schematic drawing of the EEE timings.}
\label{fig:EEE}
\end{figure}

\subsection{EEEP - Energy Efficient Ethernet with Prediction}
\label{subsec:EEEP}

In this paragraph a new strategy is presented for the efficient control of self-similar traffic, called \emph{Energy Efficient Ethernet with Prediction} (EEEP): its rationale relies on the possibility to combine traditional EEE strategies with the prediction of future traffic that exploits the long-range dependence.

For the time being, a single network device (a switch) is considered with the task of receiving traffic from some source nodes and transmitting the data to an output node: the strategy aims at effectively handling the outgoing traffic in order to minimize the energy consumption.

The strategy, which is schematically described in Fig.~\ref{fig:EEEP}, works by first partitioning the data traffic tha occurs over time in consecutive windows of fixed duration $T$. Each window is further divided into the two intervals ${\cal T}_1=\left[0,T^\prime\right)$ and ${\cal T}_2=\left[T^\prime, T\right)$.
%
\begin{figure}[ht!]
\centering
\includegraphics[ width =0.5\textwidth, trim=1cm 7.2cm 3cm 3.5cm, clip]{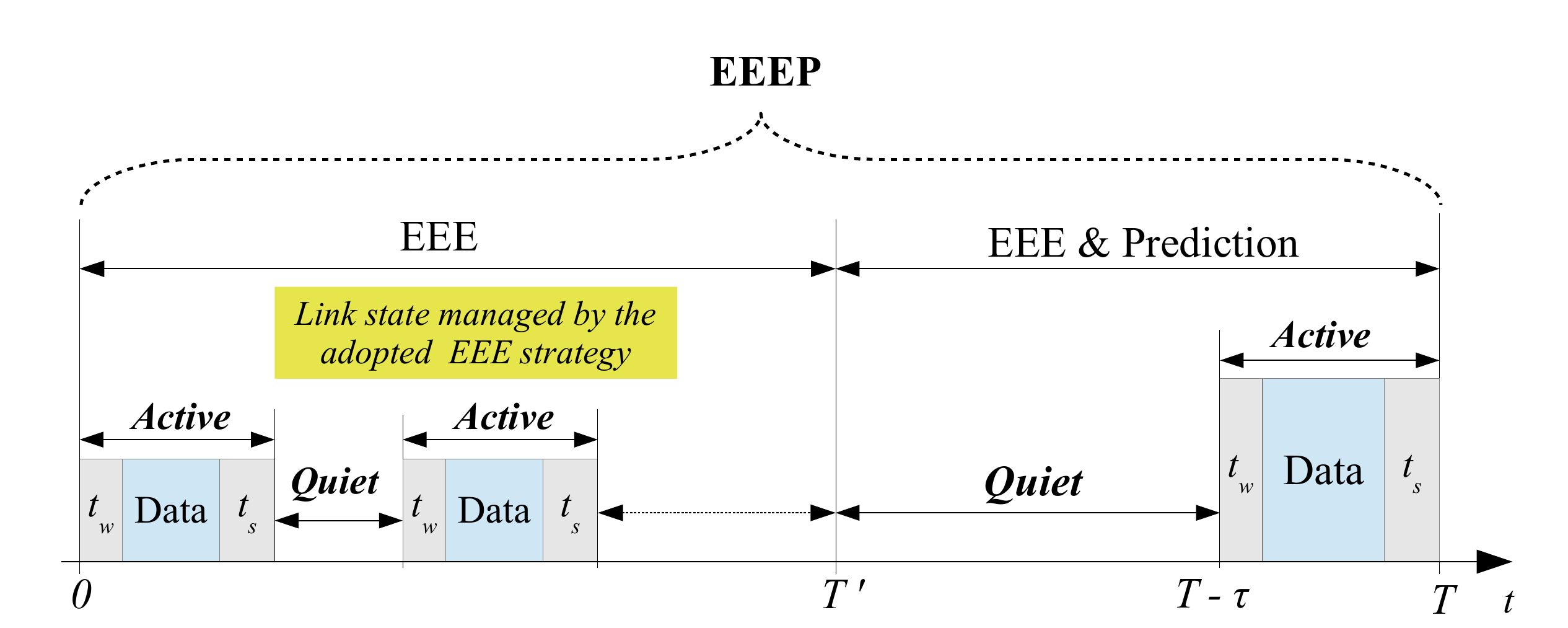}
\vspace{-0.0cm}
\caption{EEEP Strategy. Protocol schematic drawing: first, the standard EEE is adopted while estimating traffic behavior; then, the EEE strategy is applied just once exploiting the traffic prediction (the refresh interval is omitted for clarity).}
\label{fig:EEEP}
\end{figure}

During the first interval, a common EEE strategy is applied and contemporaneously a learning phase is carried out with the twofold aim of assessing the self-similarity characteristic of the traffic, and predicting its intensity. If the estimated value $\hat H$ confirms a good self-similarity (say $\hat H > \bar H =0.6$), 
the prediction can be safely applied to handle the transmission in interval ${\cal T}_2$: if the expected level of traffic in ${\cal T}_2$ is lower or equal to that measured in ${\cal T}_1$, the time $\tau$ needed to transmit the whole predicted interval data can be computed. 
Then, the link is forced to quiet state at $t=T^\prime$ and restored to active state at $t=T-\tau$ in order to ensure (in probability) the full transmission and minimize the energy spent during transitions.  
In the interval of length $\tau$, actually, the link is always on and transmits all the packets that have been queued during the idle state, in addition to the current data. 
Furthermore, it is possible to control the aggressiveness of the policy by allowing an additional time $\Delta \tau$ to be added to the already computed $\tau$: this has the effect of lowering the delay in the transmission at the cost of diminishing the energy gain.
By choosing the minimum predicted $\tau$, the duration of the active state in ${\cal T}_2$ is minimized and the energy gain with respect to the EEE is maximum; by allowing a further increment to $\tau$, that is setting a window $\tau + \Delta\tau$, the gains are reduced, but the transmission delays are lowered.

Conversely, if the expected traffic in ${\cal T}_2$ is higher than the one in ${\cal T}_1$, or the self-similarity condition is not met, then the standard EEE strategy is applied also in ${\cal T}_2$.
As it will be shown in the next Sections, EEEP allows to increase energy savings with respect to the commonly adopted EEE strategies. However, as a possible drawback of the EEEP strategy it has to be considered that, if the traffic prediction reveals incorrect, then the interval $\tau$ will not be sufficient to deliver all the queued packets that, consequently, will be delayed and transmitted at the beginning of the following window.

\subsection{Determining Traffic Self-Similarity}
\label{subsec:EEEP-self}
In ${\cal T}_1$, while data traffic arrives at the switch, EEEP starts building the table of conditional probabilities ${\mathbb P}[L_2 | L_1=l]$, that is the probability of the intensity of future traffic $L_2$, given that current traffic $L_1$ is at level $l$~\cite{ParkWillinger00}.
\begin{figure*}[!ht]
\centering
\subfigure[$H = 0.92$]{\includegraphics[ width =0.3\textwidth, trim=3cm 10cm 3cm 9.5cm, clip]{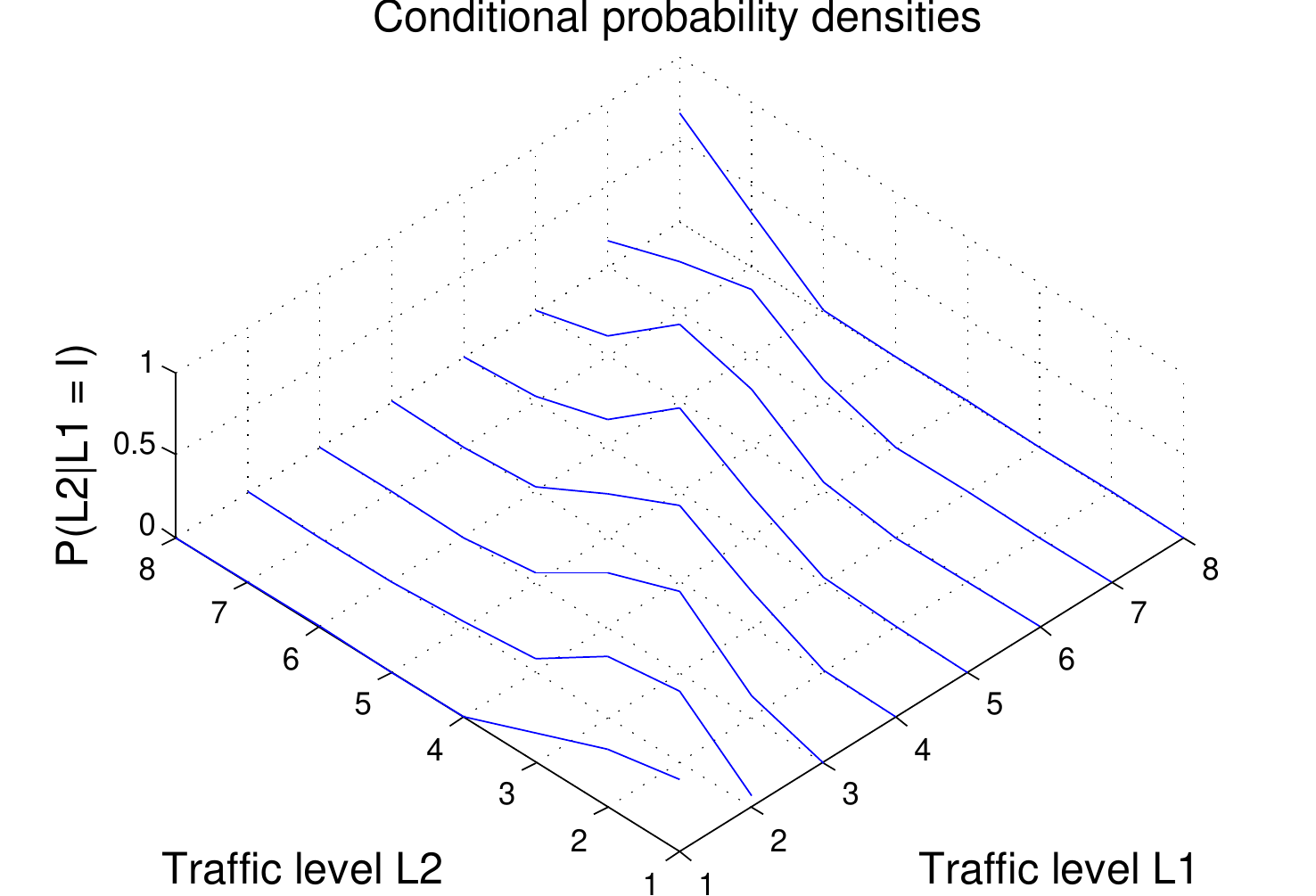}
\label{fig:CPD_high}}
\subfigure[$H = 0.78$]{\includegraphics[width =0.3\textwidth, trim=3cm 10cm 3cm 9.5cm, clip]{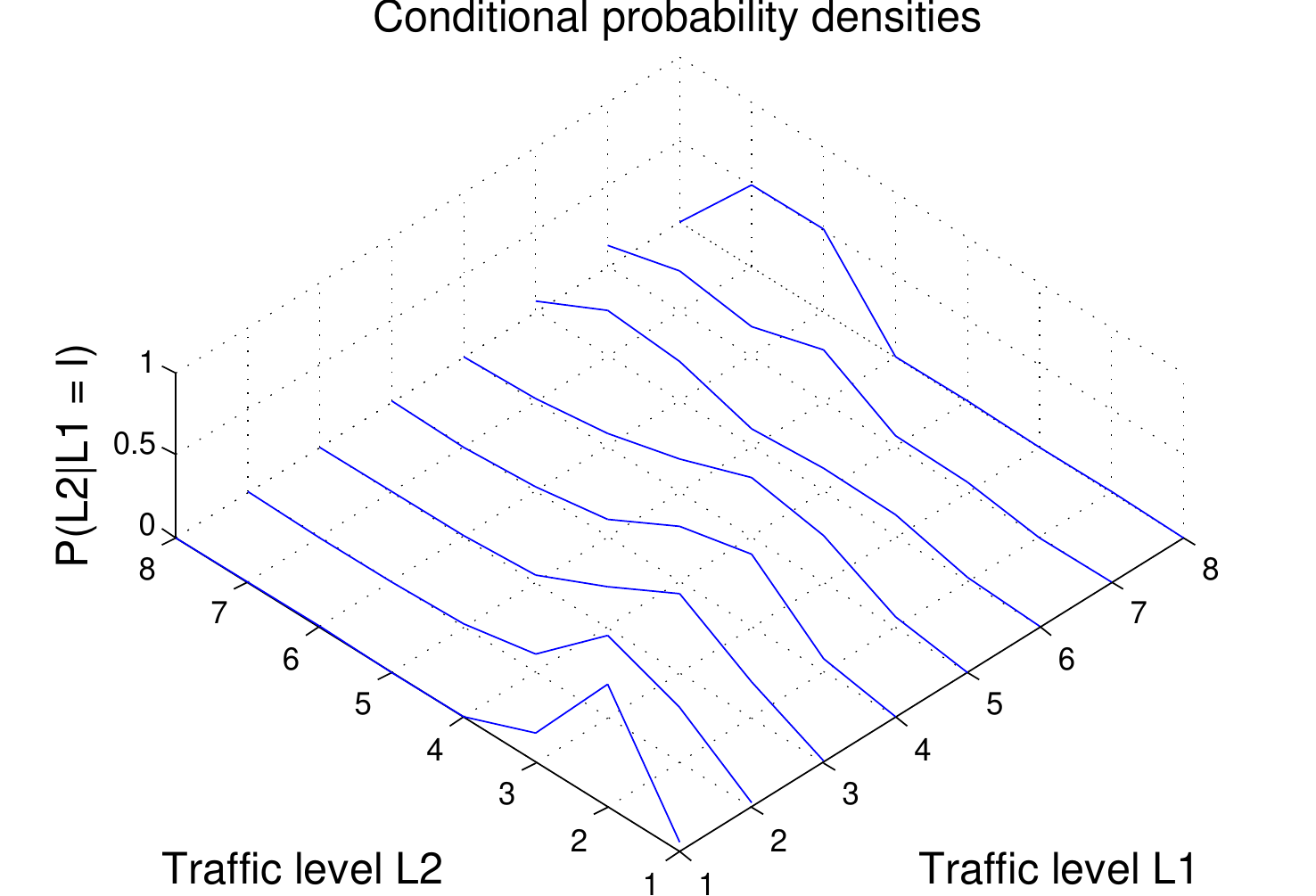}
\label{fig:CPD_medium}}
\subfigure[$H=0.67$]{\includegraphics[ width =0.3\textwidth, trim=3cm 10cm 3cm 9.5cm, clip]{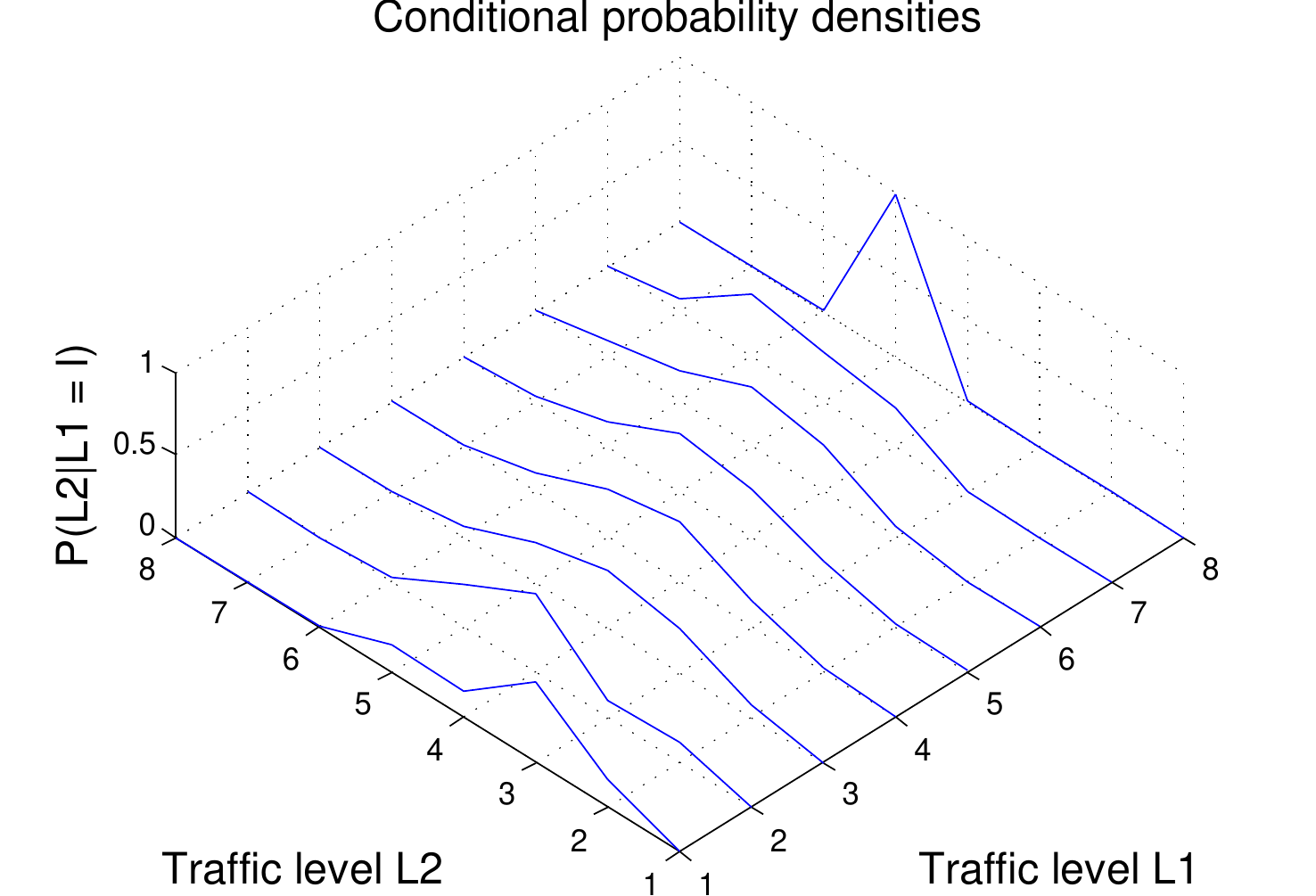}
\label{fig:CPD_low}}
\caption{EEEP Strategy. Graphics of conditional probability densities for three different levels of self-similarity (high-medium-low), with reference to the traces (a) Synth\#A3, (b) Real\#1, (c) Synth\#A6.}
\label{fig:CPD}
\end{figure*}

In detail, given a wide-sense stationary self-similar stochastic process $X(t)$ defined as in Section \ref{sec:preliminaries}\footnote{Hereafter, the traffic process  $X(t)$, may refer to any of the processes $Y(t)$, $Y^{(a)}(\bar t)$ and $Y_M(t)$, with the same meaning.}, let
\begin{equation}
V_1(t) = \sum_{i \in [t-T^\prime, t)}X(i),\quad V_2(t) = \sum_{i \in [t, t + T - T^\prime)}X(i), 
\end{equation} 
with $V_1$, $V_2$ random variables that account for the traffic modeled by process $X$ in, respectively, the recent past interval of duration $T^{\prime}$ and the near future interval of duration $T-T^\prime$, w.r.t. time instant $t$. Let
\begin{equation}
\label{eq:trafficLimits}
v_{max}^t = \max_{x} \sum_{i \in [t-T^\prime, t)}x(i), \quad v_{min}^t = \min_{x} \sum_{i \in [t-T^\prime, t)}x(i), 
\end{equation} 
with $x(i)$ a realization of $X(i)$, and $v_{max}^t$, $v_{min}^t$ the highest and the lowest traffic seen so far at time $t$. 
A traffic quantization step $\mu$ is introduced as  
\begin{equation}
\label{eq:trafficMu}
\mu = \frac{v_{max}^t - v_{min}^t}{h},
\end{equation}
and the whole traffic range (i.e. from $0$ to $+\infty$) is partitioned into $h$ levels, namely
\begin{equation}
\hspace{-1.0mm}
\left\{\left(0,v_{min}^t + \mu \right), 
         \left[v_{min}^t + \mu , v_{min}^t + 2\mu \right), 
         \dots,
         \left[v_{min}^t + (h-1)\mu, +\infty \right)\right\}
\end{equation}
%
The traffic levels can thus be associated to a random variable $L_k$ that relates to $V_k$ according to the following relations:
\begin{align}
L_k = 1 & \Leftrightarrow V_k (t) \in \left(0,v_{min}^t + \mu \right)\\
L_k = 2 & \Leftrightarrow V_k (t) \in \left[v_{min}^t + \mu , v_{min}^t + 2\mu \right)\\
 \vdots \nonumber\\
L_k = h-1 & \Leftrightarrow V_k (t) \in \left[v_{min}^t + (h-2)\mu, v_{min}^t + (h-1)\mu \right)\\
L_k = h & \Leftrightarrow V_k (t) \in \left[v_{min}^t + (h-1)\mu, \infty \right),
\end{align}
where the subscript $k$ refers to the considered interval, specifically the recent past ($L_1$) and the near future ($L_2$).

The conditional probability ${\mathbb P}[L_2 = l' | L_1 = l]$ for $l,l' \in [1,h]$ from a frequentist approach is equal to $\frac{n_{l'}}{n_l}$ with $n_l$ the number of blocks such that $L_1(V_1) = l$ and $n_{l'} \in [0, h_l]$ the number of those blocks such that $L_2(V_2) = l'$.

To this aim, a matrix of occurrencies $C\in\mathbb N^{h \times h}$ is built. Each entry $C(l,l')$ corresponds to the value $n_{l'}$, which is computed by counting the number of intervals ${\cal T}_2$ characterized by a level of traffic $L_2 = l'$ that follow intervals ${\cal T}_1$ characterized by a traffic level $L_1 = l$. In particular, all the packets arriving during periods
\begin{equation}
[iT,iT + T^\prime ],\quad i\geq 0
\end{equation}
contribute to the traffic level $L_1$, while all the packets arriving during periods
\begin{equation}
[iT + T^\prime ,(i+1)T],\quad i\geq 0
\end{equation}
contribute to the traffic level $L_2$. These values are used to update the entries of matrix $C$ at the end of every window of length $T$.
Then, $C$ is normalized by dividing each entry $(l,l')$ by the row sum $n_{l}$, so as to obtain a row stochastic matrix $P\in\mathbb R ^{h \times h}$ that represents the table of conditional probabilities, namely $P(l,l')={\mathbb P}[L_2 = l' | L_1 = l]$.
This resulting matrix $P$ is an online approximation of the actual traffic conditional probability (that could be computed offline if the whole trace were available).
In this sense, the process is iterated over more intervals of length $T$, and, once the table values converge below an accuracy threshold $\theta$
, the traffic prediction procedure is stabilized and the self similarity can be checked hereafter through the computation of the Hurst parameter $H$ using~\eqref{eq:hatH}.
%

In Fig.~\ref{fig:CPD} three instances of the conditional probability table are considered, showing how the structure of $P$ is influenced by the level of self-similarity $H$: highly self-similar traffic (Fig.~\ref{fig:CPD_high}) shows a diagonal structure for the conditional probability table; conversely, low self-similar traffic (Fig.~\ref{fig:CPD_low}) presents a flatter structure for the probability density table.

\subsection{Implementation of the EEEP Strategy}
\label{subsec:EEEP-impl}

The EEEP strategy is implemented by Algorithm~\ref{alg:EEEP}, and in this respect some comments are in order.

\begin{algorithm}[ht]
 \caption{EEEP}
 \begin{algorithmic}[1]
 \STATE {Initialization of $T$ and $T^{\prime}$}
 \STATE {Partition $T$ in ${\cal T}_1=\left[0,T^\prime\right)$ and ${\cal T}_2=\left[T^\prime, T\right)$}
 \STATE{Initialize the conditional probabilities' table $P = 0$}
 \WHILE{ $\exists l \in [1,h] \; \vert \; \| P_{t_2}(l,:) - P_{t_1}(l,:) \| > \theta $ for $t_2 > t_1$}
	 \STATE {Transmit data using EEE for the whole $T$}
 	 \STATE {Update $P$ at the end of each window of length T}
 \ENDWHILE 	
 \STATE {Compute initial value of $\hat{H}$}
\WHILE{exist traffic, within each window of length T}
 \WHILE {$t \in {\cal T}_1$}
     \STATE {Transmit data using EEE}
       \ENDWHILE
  \STATE {Compute expected traffic load and $\tau$} 
 \IF {expected traffic in ${\cal T}_2$ $\leq$ traffic measured in ${\cal T}_1$ AND $\hat{H} > \bar{H}$}
 \STATE {At $t = T^\prime$ turn off the link}
 \STATE {At $t = T - (\tau + \Delta{\tau})$ turn on the link}
  \STATE {Transmit data}
  \ELSE 
 \STATE {Transmit data using EEE}
 \ENDIF 
 \STATE {Update $P$}
 \STATE {Periodically check and compute $\hat{H}$}
 \ENDWHILE
 \end{algorithmic}
 \label{alg:EEEP}
\end{algorithm}
 
Lines 1--3 regard the initialization of the algorithm and together with lines 4--8 represent the setup phase: in particular, line 4 states the convergence condition for the probability table $P$, which requires for each table line a vector norm to be checked to ensure convergence of the whole table. 
Then, lines 9--23 describe the main transmission strategy, with the initial application of EEE and traffic load prediction (lines 10--13) and the following potential exploitation of traffic prediction (lines 14--20).
The update of matrix $P$ and the periodic check of the self-similarity condition complete the algorithm (lines 21--22).

\section{Theoretical Performance Bounds}
\label{discussion}

\subsection{Synthetic and Real Traffic Data}
\label{sec:synth&real}

In this study, both real and synthetic Ethernet traffic traces have been employed in the design phase for the tuning of the proposed algorithms and for a preliminary performance evaluation, and in a second step to provide a more effective assessment of the whole procedure.

The considered real traces (Real \#1--Real \#5) belong to the San Diego traffic archive available at~\cite{tracce2}. The selected five different traces, that refer to gigabit  Ethernet links, have been analyzed over a period of $L=200$ s. 

Similarly, two sets of synthetic traces have been generated with a length of $L=200$ s, and a transmission rate of $f=1$ Gbit/s. In particular, with reference to \eqref{eq:sumS}, a first set of ten traces (Synth \#A1--Synth \#A10) has been obtained using $M=10$ and the parameters of the Pareto distribution chosen as $b = 1$, ${\alpha = 1}$ or ${\alpha = 1.8}$ with the aim of obtaining high-$H$ and low-$H$ traffic respectively; a second set of 60 traces (Synth \#B1--Synth \#B60) has been generated with $b = 1$, selecting the parameters $M$ and $\alpha$ with uniform probability distributions, $M\in {\cal U}\left(30,70\right)$ and $\alpha\in {\cal U}\left(1.2,1.6\right)$, in order to explore a wider range of scenarios.

The most meaningful parameters of the traces are summarized in Tab.~\ref{tab:data}, where $\bar d$ refers to the average number of bits per packet.

\begin{table}[h]
\caption{Data of the traces.}
\label{tab:data}
\begin{center}
\begin{tabular}{|c|c|c|c|}
\hline 
Trace ID & H &  $\bar{d}$ [bits] \\                         
\hline 
\hline 
Real \#1 & $0.7765$ &  $5680$\\
Real \#2 & $0.7862$ &  $5656$\\
Real \#3 & $0.7805$ &  $5528$\\
Real \#4 & $0.7114$ &  $5144$\\
Real \#5 & $0.7741$ &  $5152$\\
\hline
\hline 
Synth \#A1--A5 & $0.9006-0.9292$ (high) &  $8000$\\
Synth \#A6--A10 & $0.6515-0.6744$ (low) & $8000$\\
\hline 
\hline
Synth \#B1--\#B60 & $0.7199-0.8917$ & $4368-11592$\\
\hline
\end{tabular} 
\end{center}
\end{table}

\subsection{Single strategy analysis}
\label{sec:singleWindowAnalysis}

The single strategy analysis basically refers to an ideal situation where only one specific strategy (either EEE or EEEP) is used for all time windows of duration $T$: the performance indexes that follow thus represent lower and upper bounds to the actual system capability when using the energy efficiency strategy that combines the two. Also, the performance comparison is carried out with respect to the case in which energy efficiency is not employed at all (i.e. neither EEE nor EEEP is adopted) which is referred to as \emph{Always-On} policy. Moreover, for the EEE case it is assumed that burst transmission is adopted since, as described in \cite{ReviriegoMaestroHernandez10}, this strategy maximizes the time spent in quiet state and, hence, results to be highly efficient. Specifically, for the single switch configuration described in \S\ref{subsec:EEEP}, it is supposed that packets are collected during burst units of fixed duration $T_B$ before their actual transmission. It is also supposed that, for every burst unit, the number of packets queued in the buffer of the outgoing link never reaches the threshold overflows, so that frame transmissions take place exactly at the boundaries of the burst units. Consequently, packets may be delivered with a maximum delay represented by $T_B$. The value selected in this paper is ${T_B=1}$~ms, which, as addressed in ~\cite{ReviriegoMaestroHernandez10}, implies a tolerable delay for most applications while ensuring a considerable power saving with respect to frame transmission.
Finally, here and in the following, for convenience and without loss of generality, it is assumed that both $T$ and $T^\prime$ are multiples of $T_B$.

If only EEE is adopted, considering that under the above hypotheses there is only one transition (from quiet to active and back) during a burst unit, then the number of transitions in $T$ is given by the ratio $\frac{T}{T_B}$. Consequently, the percentage of $T$ in which the link is in quiet state is given by the following equation:
\begin{align}
p_{EEE} &= \frac{\left(T_B - T_{trans} - \bar{N} \bar{T}_{pack}\right)\frac{T}{T_B}}{T}= \frac{T_B - T_{trans} - \bar{N} \bar{T}_{pack}}{T_B}
\label{eq:pEEE1}
\end{align}
where $\bar{N}$ accounts for the mean number of frames collected during a burst unit, $\bar{T}_{pack}$ is the mean time to transmit a packet and, as defined in \S\ref{sec:selfSimilarTrafficShaping}, $T_{trans}$ is given by the sum $T_s+T_w$. 

Conversely, with the EEEP strategy the number of transitions is given by $\left(\frac{T^\prime}{T_B}+1\right)$, hence  allowing to save $\left(\frac{T - T^\prime}{T_B} - 1\right)$ EEE transitions. Therefore, in such a case, \eqref{eq:pEEE1} modifies to:
\begin{align}
p_{EEEP} = & \frac{\left(T_B - \bar{N} \bar{T}_{pack}\right)\frac{T^\prime}{T_B} - T_{trans} \left(\frac{T^\prime}{T_B} + 1\right)}{T}
\nonumber\\
 & + \frac {(T  - T^\prime) - (\bar {\tau} + \bar {\Delta \tau)}}{T}
\label{eq:pEEEPreal}
\end{align}
where $\bar{\tau}$ and $\Delta \bar{\tau}$ are the mean values of the $\tau$ and $\Delta \tau$, respectively, as introduced in \S\ref{subsec:EEEP}.

It is worth noticing that packets that arrive after $T^\prime$ will be necessarily delayed, since their transmission can take place only after the outgoing link returns to the active state, i.e. after $\left(T  - T^\prime\right) - \left(\tau + \Delta \tau\right)$. Clearly, under the hypothesis that the traffic requests will be satisfied within the current window, the maximum delay that may be experienced by these packets is $T-T^\prime$.


Considering the single window of length $T$, hence assuming that all packets received in ${\cal T}_2$ can be exactly transmitted in $[T-\bar{\tau},T)$, and that no additional time $\Delta\bar{\tau}$ is allowed, namely
\begin{equation}
\left\{
	\begin{array}{l}
		\bar{\tau} = \bar{N} \bar{T}_{pack} \frac{T - T^\prime}{T_B}\\
		\Delta \bar{\tau} = 0
	\end{array}
\right .
\end{equation}
equation (\ref{eq:pEEEPreal}) simplifies to:
\begin{align}
\label{eq:pEEEP}
p_{EEEP} &= \frac{\left(T_B - \bar{N} \bar{T}_{pack}\right)\frac{T}{T_B} - T_{trans} \left(\frac{T^\prime}{T_B} + 1\right)}{T}.
\end{align}
This performance index clearly converges to that of the standard EEE policy for the case limit $T^\prime\rightarrow T$ and, indeed, the shorter the interval $T^\prime$ the higher the gain obtained by applying EEEP. Conversely, $T^\prime$ is practically (lower) bounded by the necessity to carry out an accurate traffic prediction.

To exemplify the benefits introduced by the prediction step, with reference to trace Real \#1 it is worth saying that adopting EEEP with $T^\prime=T/2$ (which represents a good trade off between the two aforementioned requirements) and setting $T=100$ ms (which is sufficient to provide a good prediction), then from the ratio $\frac{p_{EEEP}}{{p_{EEE}}}$ it is predicted a gain of around 15\% with respect to standard EEE.
 
An efficiency measure can be defined as the ratio between the time interval employed for the actual data transmission and the time interval during which the link is active. Such an efficiency represents a meaningful performance index, since it accounts for the capability of the strategies to maintain a link in the active state for the time strictly necessary to data transmission.

For the Always-On policy, it stands:
\begin{equation}
\eta_{ON}=\frac{\bar{N}\bar{T}_{pack}\frac{T}{T_B}}{T}=\frac{\bar{T}_{pack}}{T_B}\bar{N},
\label{eq:etaON}
\end{equation}
while for the EEE, it follows: 
\begin{equation}
\eta_{EEE}=\frac{\bar{N}\bar{T}_{pack}\frac{T}{T_B}}{\frac{T_{trans}T+\bar{N}\bar{T}_{pack}T}{T_B}}=
                      \frac{\bar{N}}{\bar{N}+\frac{T_{trans}}{\bar{T}_{pack}}},
\label{eq:etaEEE}
\end{equation}
and finally for the EEEP, it yields:

\begin{align}
\eta_{EEEP}&=\frac{\bar{N}\bar{T}_{pack}\frac{T}{T_B}}{T-\left(T_B-\bar{N}\bar{T}_{pack}\right)\frac{T}{T_B}+ T_{trans}\left(\frac{T^\prime}{T_B}+1\right)}\\                 
                     &=\frac{\bar{N}}{\bar{N}+\frac{T_{trans}}{\bar{T}_{pack}} \frac{T^\prime+T_B}{T}}
                     =\frac{\bar{N}}{\bar{N}+\frac{T_{trans}}{\bar{T}_{pack}} \kappa}
\label{eq:etaEEEP}
\end{align}
where the factor $\kappa=\frac{T^\prime+T_B}{T}$ is introduced that regulates the gain in adopting EEEP strategy. 
Both $\eta_{EEE}$ and $\eta_{EEEP}$ represent hyperbolae passing through the origin and tending to unity, the former with vertical asymptote at $\left(-\frac{T_{trans}}{\bar{T}_{pack}}\right)$, the latter at $\left(-\frac{T_{trans}}{\bar{T}_{pack}}\kappa\right)$. 


The behavior of these efficiency values is shown in the three panels of Fig.~\ref{fig:efficiency_scalingTprime} with respect to the numerical case of trace Real \#1, as a function of both the window length $T$ and the interval $T^\prime$. 
These two quantities translate the dependence of $\eta_{EEEP}$ on the parameter $\kappa$, while $\eta_{ON}$ and $\eta_{EEE}$ are constant with respect to $T$ (and of course do not depend on $T^\prime$). Indeed, the central plot highlights how the relation $\eta_{ON}<\eta_{EEE}\le\eta_{EEEP}$ always holds.
Furthermore, the leftmost plot reports the dependence of  $\eta_{EEEP}$ on $T$ (through $\kappa$): basically, the shorter the time window $T$, the more the transition time $T_{trans}$ slightly affects the performance figure, which consequently reduces.
More interestingly, on the rightmost panel, the growth of $\eta_{EEEP}$ with the inverse of $T^\prime$ (expressed as a fraction of $T$) is clearly shown. 

\begin{figure*}[ht!]
\centering

\subfigure{\includegraphics[ width =0.32\textwidth,trim = 1.5cm 7.0cm 1.5cm 8.0cm,clip]{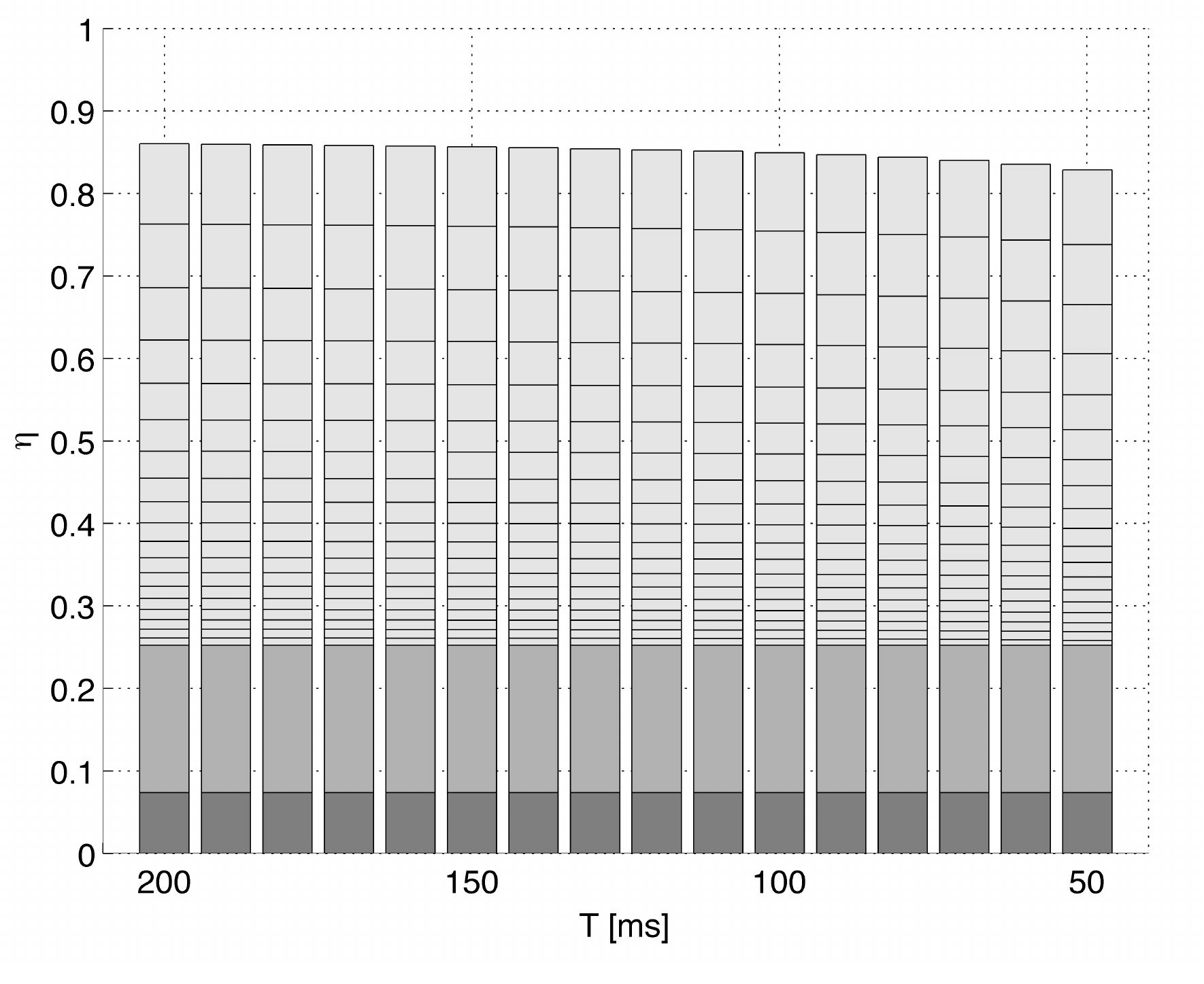}}
\subfigure{\includegraphics[ width =0.33\textwidth,trim = 1.5cm 7.0cm 1.5cm 8.0cm,clip]{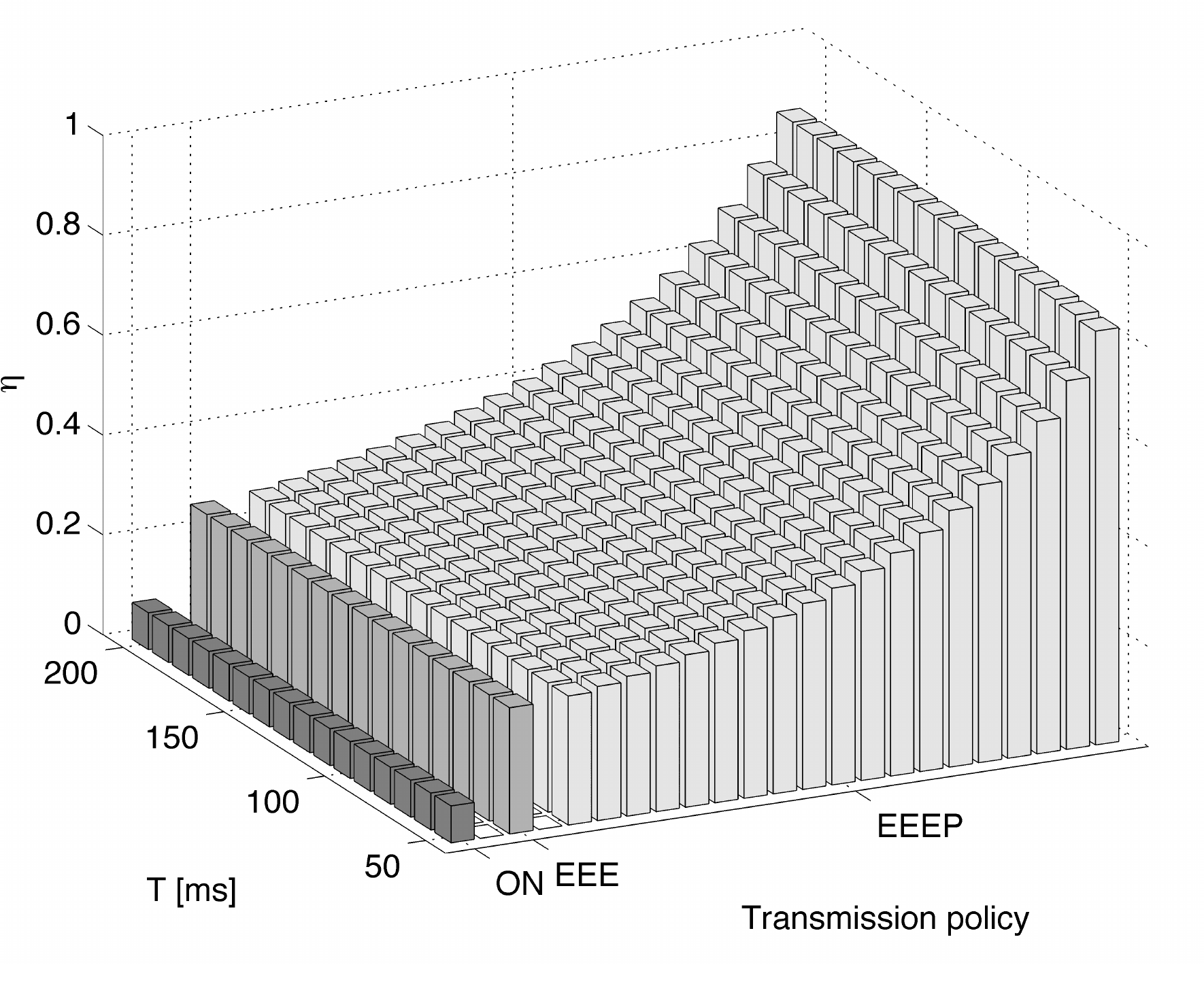}}
\subfigure{\includegraphics[ width =0.32\textwidth,trim = 1.5cm 7.0cm 1.5cm 8.0cm,clip]{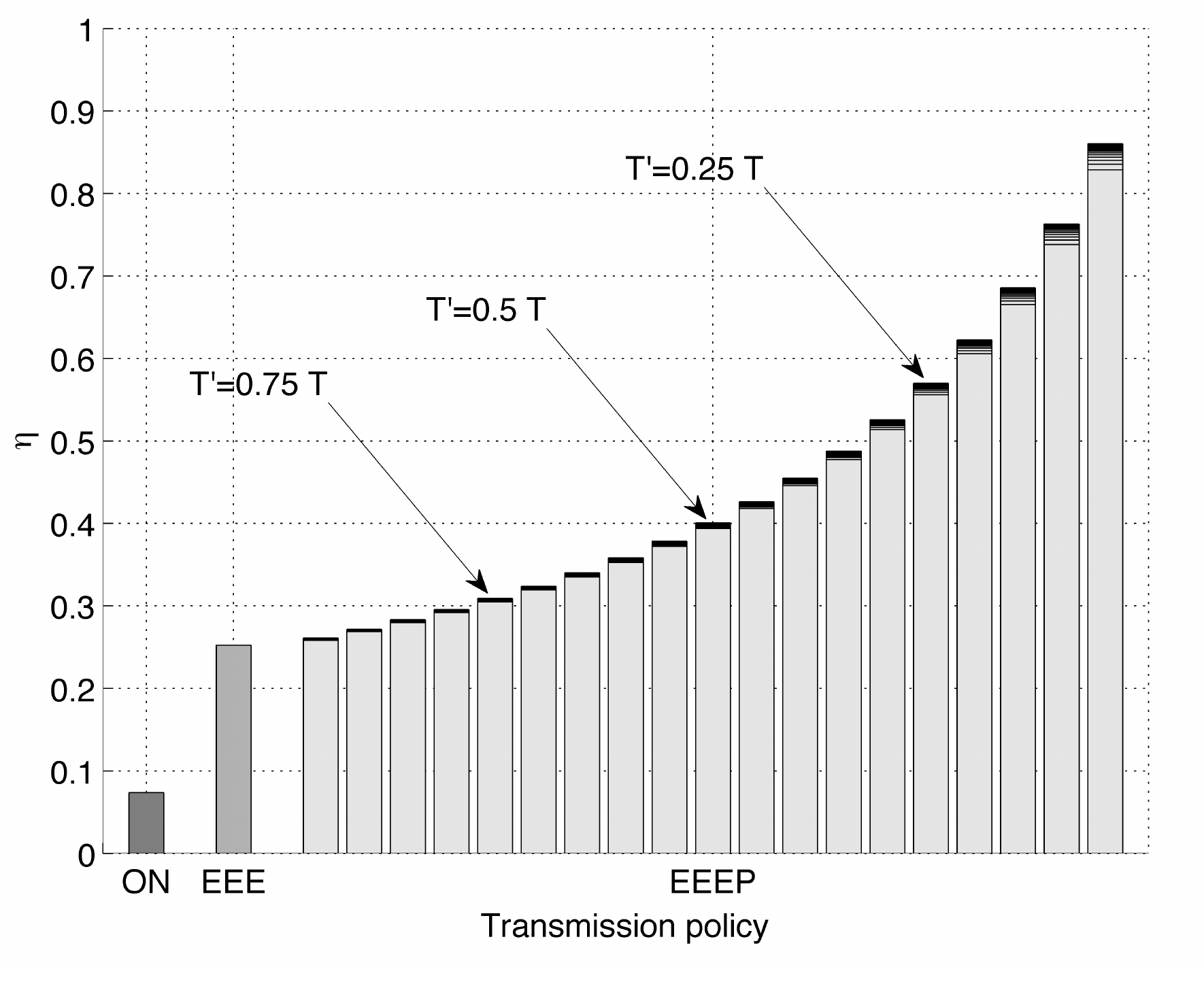}}
\vspace{0.0cm}
\caption{Trace Real \#1. Efficiency comparison among different transmission strategies. The Always-On (ON), the EEE and the EEEP strategies are compared in terms of $\eta$, and shown respectively in dark gray, gray, and light gray. When considering the EEEP policy, different values for $T^\prime$ are shown.}
\label{fig:efficiency_scalingTprime}
\end{figure*}

Efficiency formulas \eqref{eq:etaON}-\eqref{eq:etaEEE}-\eqref{eq:etaEEEP} and the aforementioned observation lead to the formalization of the following propositions.
\begin{proposition} [Performance characterization] In the single strategy analysis, there exists two critical values for the number $\bar{N}$ of packets transmitted in a burst unit where the performance of the Always-On policy equalizes first EEE and then EEEP; 
these values represent a load limit for the energy efficient policies:
\begin{equation}
\bar{N}_{EEE} = \left\lfloor\frac{T_B}{\bar{T}_{pack}}-\frac{T_{trans}}{\bar{T}_{pack}}\right\rfloor
\label{eq:EEEloadlimit}
\end{equation}
\begin{equation}
\bar{N}_{EEEP} = \left\lfloor\frac{T_B}{\bar{T}_{pack}} -\frac{T_{trans}}{\bar{T}_{pack}}\kappa\right\rfloor.
\label{eq:EEEPloadlimit}
\end{equation}
Conversely, a value $\bar{N}^\ast$ can be computed for the energy efficient strategies, at which the gain in efficiency is maximized with respect to the Always-On policy: 

\begin{equation}
\bar{N}^\ast_{EEE} = \left\lfloor\sqrt{\frac{T_{trans}}{\bar{T}_{pack}}} \left(\sqrt{\frac{T_B}{\bar{T}_{pack}}} - \sqrt{\frac{T_{trans}}{\bar{T}_{pack}}} \right)\right\rfloor
\label{eq:EEEoptimal}
\end{equation}

\begin{equation}
\bar{N}^\ast_{EEEP} = \left\lfloor\sqrt{\frac{T_{trans}}{\bar{T}_{pack}}\kappa} \left(\sqrt{\frac{T_B}{\bar{T}_{pack}}} - \sqrt{\frac{T_{trans}}{\bar{T}_{pack}}\kappa} \right)\right\rfloor
\label{eq:EEEPoptimal}
\end{equation}
%
\label{prop:prop1}
\end{proposition}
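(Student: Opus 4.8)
The plan is to treat each of the four claimed values separately, relying only on the three efficiency expressions \eqref{eq:etaON}, \eqref{eq:etaEEE}, \eqref{eq:etaEEEP}, regarded as functions of the single continuous variable $\bar N>0$; the integrality of $\bar N$ will be imposed only at the very end through the floor operator. Throughout I would exploit the fact that the EEEP formulas are obtained from the EEE ones by the single substitution $\frac{T_{trans}}{\bar T_{pack}}\mapsto \kappa\,\frac{T_{trans}}{\bar T_{pack}}$ in the denominator of $\eta$, so that every EEEP result follows from its EEE counterpart with no additional work.

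For the two load limits \eqref{eq:EEEloadlimit} and \eqref{eq:EEEPloadlimit} I would simply impose the equalization condition $\eta_{ON}=\eta_{EEE}$. Writing $\frac{\bar T_{pack}}{T_B}\bar N = \frac{\bar N}{\bar N + T_{trans}/\bar T_{pack}}$ and cancelling the common factor $\bar N$ (legitimate since a nontrivial load has $\bar N>0$) reduces this to a linear equation whose unique root is $\bar N = T_B/\bar T_{pack} - T_{trans}/\bar T_{pack}$; rounding down to the nearest admissible integer gives \eqref{eq:EEEloadlimit}, and the EEEP case follows by the substitution above. I would additionally record the sign of $\eta_{EEE}-\eta_{ON}$ on either side of this root: since $\eta_{EEE}$ is increasing, concave and saturates at $1$ while $\eta_{ON}$ is linear and starts off steeper (because $T_{trans}<T_B$ forces $\bar T_{pack}/T_{trans}>\bar T_{pack}/T_B$), the efficient policy dominates below the crossing and is dominated above it. This monotonicity of the gap is precisely what justifies calling $\bar N_{EEE}$ and $\bar N_{EEEP}$ genuine load limits.

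For the two optimal loads \eqref{eq:EEEoptimal} and \eqref{eq:EEEPoptimal} I would define the efficiency gain $g(\bar N):=\eta_{EEE}(\bar N)-\eta_{ON}(\bar N)$ and maximize it. Differentiating yields $g'(\bar N)=\frac{T_{trans}/\bar T_{pack}}{(\bar N + T_{trans}/\bar T_{pack})^2}-\frac{\bar T_{pack}}{T_B}$; setting $g'=0$ and taking the positive root of the resulting quadratic gives $\bar N + T_{trans}/\bar T_{pack}=\sqrt{(T_{trans}/\bar T_{pack})(T_B/\bar T_{pack})}$, which, after factoring out $\sqrt{T_{trans}/\bar T_{pack}}$, is exactly the form displayed in \eqref{eq:EEEoptimal}; the EEEP expression is recovered verbatim by the same substitution. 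To confirm that this critical point is a maximizer and not a minimizer or inflection, I would compute $g''(\bar N)=-2(T_{trans}/\bar T_{pack})(\bar N + T_{trans}/\bar T_{pack})^{-3}<0$ for every $\bar N>0$, so that $g$ is strictly concave and the critical point is its unique global maximum. One checks en passant that this optimum lies strictly below the load limit, since $\sqrt{ac}<1$ with $ac=T_{trans}/T_B$, consistent with the geometry described above.

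The only genuinely delicate point is the passage from the continuous optimum to the integer value enforced by the floor. Because $g$ is strictly concave, the true integer maximizer is one of the two integers bracketing the continuous optimizer, so a fully rigorous statement would compare $g$ at $\lfloor\cdot\rfloor$ and $\lceil\cdot\rceil$; since the propositions are meant to serve as design guidelines, the floor is adopted as the natural conservative choice, coherent with the load-limit convention. Everything else reduces to routine algebra, so I expect this rounding discussion to be the main—and only mild—obstacle.
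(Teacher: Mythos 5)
Your proof is correct, and it reaches all four formulas, but it travels a partly different road from the paper. For the load limits, the paper does not equalize efficiencies at all: it writes a time-budget (saturation) equation for each policy, namely $T=\left(\bar{N}_{EEE}\bar{T}_{pack}+T_{trans}\right)\frac{T}{T_B}$ for EEE and the analogous equation with $\frac{T^\prime}{T_B}+1$ transitions for EEEP, and reads off $\bar{N}_{EEE}$, $\bar{N}_{EEEP}$ as the maximum packet counts that physically fit in a window. Your equalization $\eta_{ON}=\eta_{EEE}$ (resp.\ $\eta_{EEEP}$) gives the same values because at saturation the link is active for the entire burst unit, so $\eta_{EEE}=\frac{\bar{N}\bar{T}_{pack}}{T_B}=\eta_{ON}$; your route has the advantage of matching the proposition's own wording (``the Always-On policy equalizes first EEE and then EEEP'') literally, while the paper's route explains why these values deserve the name \emph{load limit}, something your argument recovers only through the sign analysis of the gap. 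For the optima, both you and the paper differentiate $\Delta\eta=\eta_\ast-\eta_{ON}$ and set the derivative to zero; the genuine improvement on your side is the observation that $\eta_{EEEP}$ is $\eta_{EEE}$ under the substitution $\frac{T_{trans}}{\bar{T}_{pack}}\mapsto\kappa\frac{T_{trans}}{\bar{T}_{pack}}$, which lets you inherit \eqref{eq:EEEPloadlimit} and \eqref{eq:EEEPoptimal} for free, whereas the paper redoes the EEEP case through an explicit quadratic $A\bar{N}^2+2B\bar{N}+C=0$ with cumbersome (and, in the printed version, typo-ridden) coefficients. You also supply two items the paper omits: the second-derivative check $g''<0$ certifying a maximum (the paper merely asserts ``point of maximum''), and the discussion of integer rounding under the floor, which the paper passes over silently.

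One small slip to fix: you write that ``$\eta_{ON}$ is linear and starts off steeper,'' but your own parenthetical inequality $\bar{T}_{pack}/T_{trans}>\bar{T}_{pack}/T_B$ shows it is $\eta_{EEE}$ whose initial slope is the larger one; this is in fact necessary for your conclusion that the efficient policy dominates below the crossing, so the argument stands once the attribution is corrected.
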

\begin{proof}
When applying EEE policy, in a time window of length $T$ the $\frac{T}{T_B}$ transitions and transmissions need to be accommodated, meaning that the time limit stands: 
\begin{equation}
T=\left(\bar{N}_{EEE} \bar{T}_{pack}+T_{trans}\right)\frac{T}{T_B};
\end{equation}
similarly, in the EEEP case, this equation modifies to:
\begin{equation}
T=\left(\bar{N}_{EEEP} \bar{T}_{pack}+T_{trans}\right)\frac{T^\prime}{T_B} + \bar{N}_{EEEP} \bar{T}_{pack}\frac{T-T^\prime}{T_B}+T_{trans}.
\end{equation}
The load limits \eqref{eq:EEEloadlimit} and \eqref{eq:EEEPloadlimit} follow straightforwardly, by solving the two equations respectively in the unknown $\bar{N}_{EEE}$ and $\bar{N}_{EEEP}$, which are the maximum number of packets, $N_{max}$, reachable by EEE and by EEEP.\\
By computing the efficiency difference $\Delta\eta_{EEE}=\eta_{EEE}-\eta_{ON}$ and differentiating w.r.t. $\bar{N}$, it follows:

\begin{equation}
\frac{\partial\Delta\eta_{EEE}}{\partial \bar{N}}=\bar{T}_{pack}\frac{T_B T_{trans} - T_{trans}^2-2\bar{N}\bar{T}_{pack} T_{trans}-\bar{N}^2 \bar{T}_{pack}^2}{T_B \left(\bar{N}\bar{T}_{pack}+T_{trans}\right)^2};
\end{equation}
this expression is then equalized to zero to obtain the point of maximum $\bar{N}_{EEE}^\ast$ in \eqref{eq:EEEoptimal}.\\
The same operation can be computed on $\Delta\eta_{EEEP}=\eta_{EEEP}-\eta_{ON}$: $\frac{\partial\Delta\eta_{EEEP}}{\partial \bar{N}}=0$ after some calculations leads to a quadratic expression:
\begin{equation}
A\bar{N}^2+2B\bar{N}+C=0
\end{equation}
with
\begin{equation}
\begin{array}{l} 
	A=\bar{T}_{pack}^3 T^2 T_B \\ 
	B= \bar{T}_{pack}^2TT_BT_{track}\left(T^\prime+T_B\right)\\ 
	C= \bar{T}_{pack}T_{track}^2T_1\left(T^\prime+T_B\right)^2-\bar{T}_{pack}T T_B^2T_{track}\left(T^\prime+T_B\right)
\end{array}
\nonumber
\end{equation}
whose unique positive solution is given by \eqref{eq:EEEPoptimal}.
\end{proof}

\begin{corollary} [Efficiency bounds] Consider Always-On, EEE, and EEEP, in the single window analysis: only $\eta_{ON}$ can reach unitary efficiency (when $\bar{N}\bar{T}_{pack}=T_B$), while $\eta_{EEE}$ and $\eta_{EEEP}$ are strictly below unity:
\begin{equation}
\eta_{EEE}\le 1-\frac{T_{trans}}{T_{B}}
\label{eq:EEEbound}
\end{equation}
\begin{equation}
\eta_{EEEP} \le 1-\frac{T_{trans}}{T_B}\kappa.
\label{eq:EEEPbound}
\end{equation}
\label{prop:cor1}
\end{corollary}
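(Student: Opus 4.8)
The plan is to treat the Always-On case separately from the two energy-efficient policies, since the former is governed by a different functional form. For $\eta_{ON}$ I would simply read the claim off \eqref{eq:etaON}: since $\eta_{ON} = \bar{N}\bar{T}_{pack}/T_B$ is linear in $\bar{N}$, it attains the value $1$ exactly when $\bar{N}\bar{T}_{pack} = T_B$, which is precisely the saturation condition stated in the corollary.

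For $\eta_{EEE}$ and $\eta_{EEEP}$ the key observation is that both \eqref{eq:etaEEE} and \eqref{eq:etaEEEP} have the form $x/(x+c)$ with $x = \bar{N}$ and a strictly positive constant $c$ (namely $c = T_{trans}/\bar{T}_{pack}$ and $c = (T_{trans}/\bar{T}_{pack})\kappa$ respectively, both positive because all timing quantities and $\kappa$ are positive). Such a function is strictly increasing in $\bar{N}$ and bounded above by $1$, so both efficiencies stay strictly below unity for every finite load, which settles the strict-inequality claim. To obtain the explicit bounds I would then invoke the load limits established in Proposition~\ref{prop:prop1}: the feasible integer loads satisfy $\bar{N} \le \bar{N}_{EEE} \le (T_B - T_{trans})/\bar{T}_{pack}$ and $\bar{N} \le \bar{N}_{EEEP} \le (T_B - T_{trans}\kappa)/\bar{T}_{pack}$. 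Since the efficiencies are increasing in $\bar{N}$, their maximal attainable values occur at these upper endpoints.

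The remaining step is a short algebraic substitution. Plugging $\bar{N} = (T_B - T_{trans})/\bar{T}_{pack}$ into \eqref{eq:etaEEE}, the denominator collapses to $T_B/\bar{T}_{pack}$ and the ratio reduces to $(T_B - T_{trans})/T_B = 1 - T_{trans}/T_B$, which is \eqref{eq:EEEbound}. The identical manipulation with $\bar{N} = (T_B - T_{trans}\kappa)/\bar{T}_{pack}$ in \eqref{eq:etaEEEP} gives $(T_B - T_{trans}\kappa)/T_B = 1 - (T_{trans}/T_B)\kappa$, i.e.\ \eqref{eq:EEEPbound}. Equivalently, one may clear the positive denominators and check directly that each bound is \emph{algebraically identical} to the corresponding load-limit inequality of Proposition~\ref{prop:prop1}, so that nothing beyond that proposition is actually needed.

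I do not expect a genuine obstacle here. The only point demanding care is the monotonicity-plus-boundary argument: confirming that the supremum of an increasing efficiency over the feasible load range is attained at the load limit rather than in the interior, and noting that the floor operation in $\bar{N}_{EEE}$ and $\bar{N}_{EEEP}$ only tightens the inequality, since $\lfloor\cdot\rfloor$ lowers the evaluation point of an increasing function and hence can never violate the stated upper bounds.
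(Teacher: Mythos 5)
Your proposal is correct and takes essentially the same route as the paper: the paper's own proof likewise rewrites the efficiencies as $1$ minus a positive term, namely $\eta_{EEE}\le 1-\frac{T_{trans}}{T_{trans}+\bar{N}_{EEE}\bar{T}_{pack}}$ and $\eta_{EEEP}\le 1-\frac{T_{trans}\left(T^\prime+T_B\right)}{T_{trans}\left(T^\prime+T_B\right)+\bar{N}_{EEEP}\bar{T}_{pack}T}$, and then substitutes the load limits \eqref{eq:EEEloadlimit} and \eqref{eq:EEEPloadlimit} from Prop.~\ref{prop:prop1} to obtain \eqref{eq:EEEbound} and \eqref{eq:EEEPbound}. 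Your additional remarks on monotonicity of $x/(x+c)$ in $\bar{N}$ and on the floor operation only tightening the bound simply make explicit steps the paper leaves implicit, and are accurate.
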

\begin{proof}
These results follow from Prop.~\ref{prop:prop1}, by substituting the actual expressions for ${\bar{N}_{EEE}}$ and ${\bar{N}_{EEEP}}$ respectively in the relations below:
\begin{equation}
\eta_{EEE}\le 1-\frac{T_{trans}}{T_{trans}+{\bar{N}_{EEE}} \bar{T}_{pack}}
\label{eq:EEEbound0}
\end{equation}
\begin{equation}
\eta_{EEEP} \le 1-\frac{T_{trans}\left(T^\prime+T_B\right)}{T_{trans}\left(T^\prime+T_B\right) + \bar{N}_{EEEP} \bar{T}_{pack} T}
\label{eq:EEEPbound20}
\end{equation}
\end{proof}

Interestingly, from the bounds of Prop.~\ref{prop:prop1} and Cor.~\ref{prop:cor1}, it follows how the performance figure increase between the two energy efficient policies EEE and EEEP is regulated by the factor $\kappa$.

This overall performance behavior is summarized in the two panels of Fig.~\ref{fig:efficiency_scalingTload} where the efficiency of the single window strategy is compared with that of the always on policy versus the traffic load (similarly to the previous examples, $T^\prime$ has been set equal to $T/2$): it can be seen from Fig.~\ref{fig:efficiency_scalingTload_A} that EEEP can reach an efficiency of around 89\% \eqref{eq:EEEPbound}, while EEE is limited to approximately 78\% \eqref{eq:EEEbound}.
These values actually correspond to the load limits stated in \eqref{eq:EEEloadlimit} and \eqref{eq:EEEPloadlimit}.
Furthermore, as can be seen in Fig.~\ref{fig:efficiency_scalingTload_B}, that reports the differences between efficiencies, the maximum advantage in adopting the energy efficient strategies is attained for an offered load around 20\% and the performance increase of EEEP with respect to EEE results always higher than approximately 10\%, after the initial rapid growth. As a specific case, the vertical line in Fig.~\ref{fig:efficiency_scalingTload} refers to the traffic load of the trace Real \#1 (dots in the plot) and is consistent with the plot of Fig.~\ref{fig:efficiency_scalingTprime}: in this situation, the theoretical bounds suggest $\eta_{ON}=7\%$, $\eta_{EEE}=26\%$, and $\eta_{EEEP}=41\%$. 
\begin{figure*}[ht!]
\centering
\subfigure[]{\includegraphics[ width =0.43\textwidth,trim = 0.3cm 7cm 0.3cm 7cm,clip]{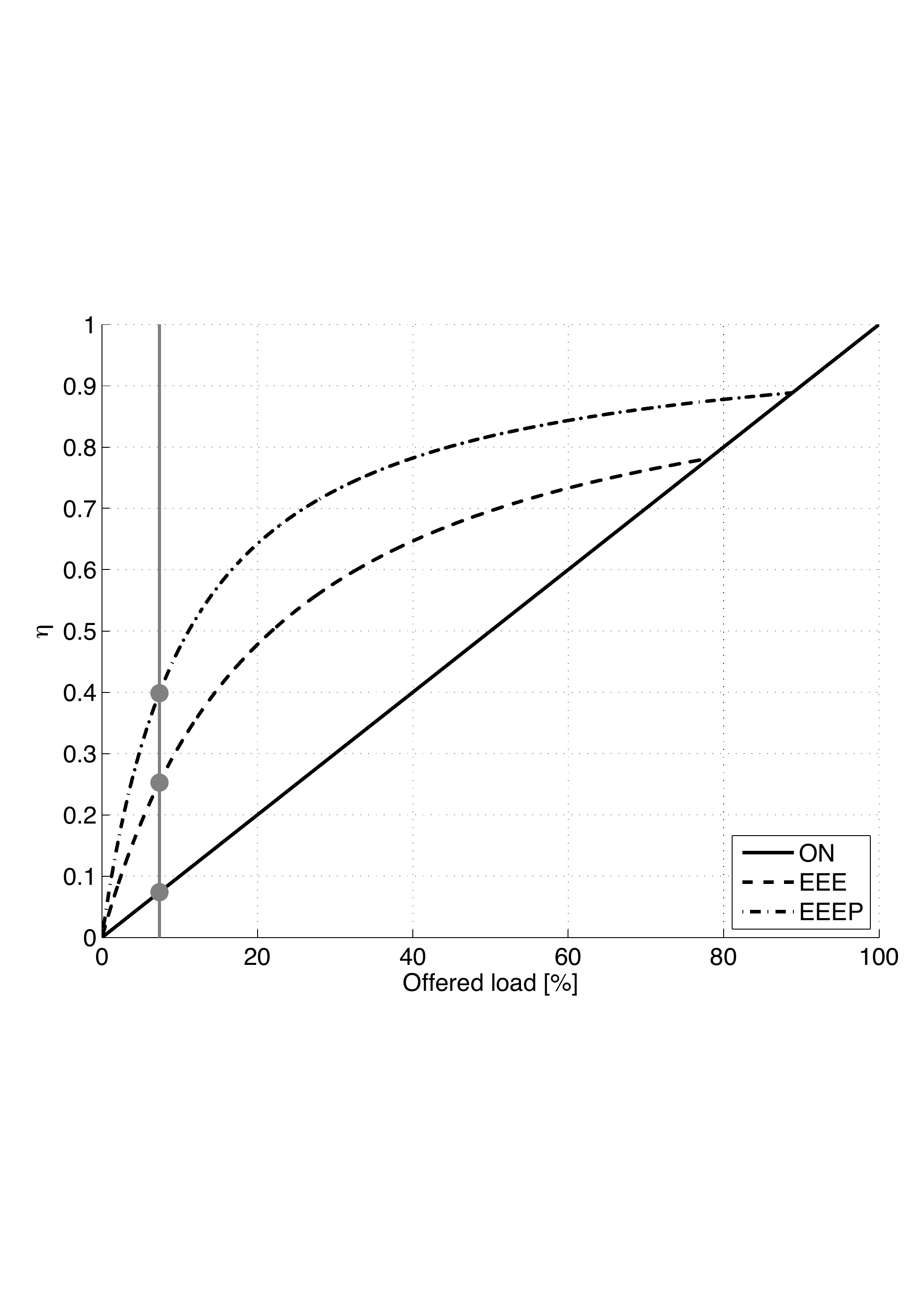} \label{fig:efficiency_scalingTload_A}}
\hspace{1cm}
\subfigure[]{\includegraphics[ width =0.43\textwidth,trim = 0.3cm 7cm 0.3cm 7cm,clip]{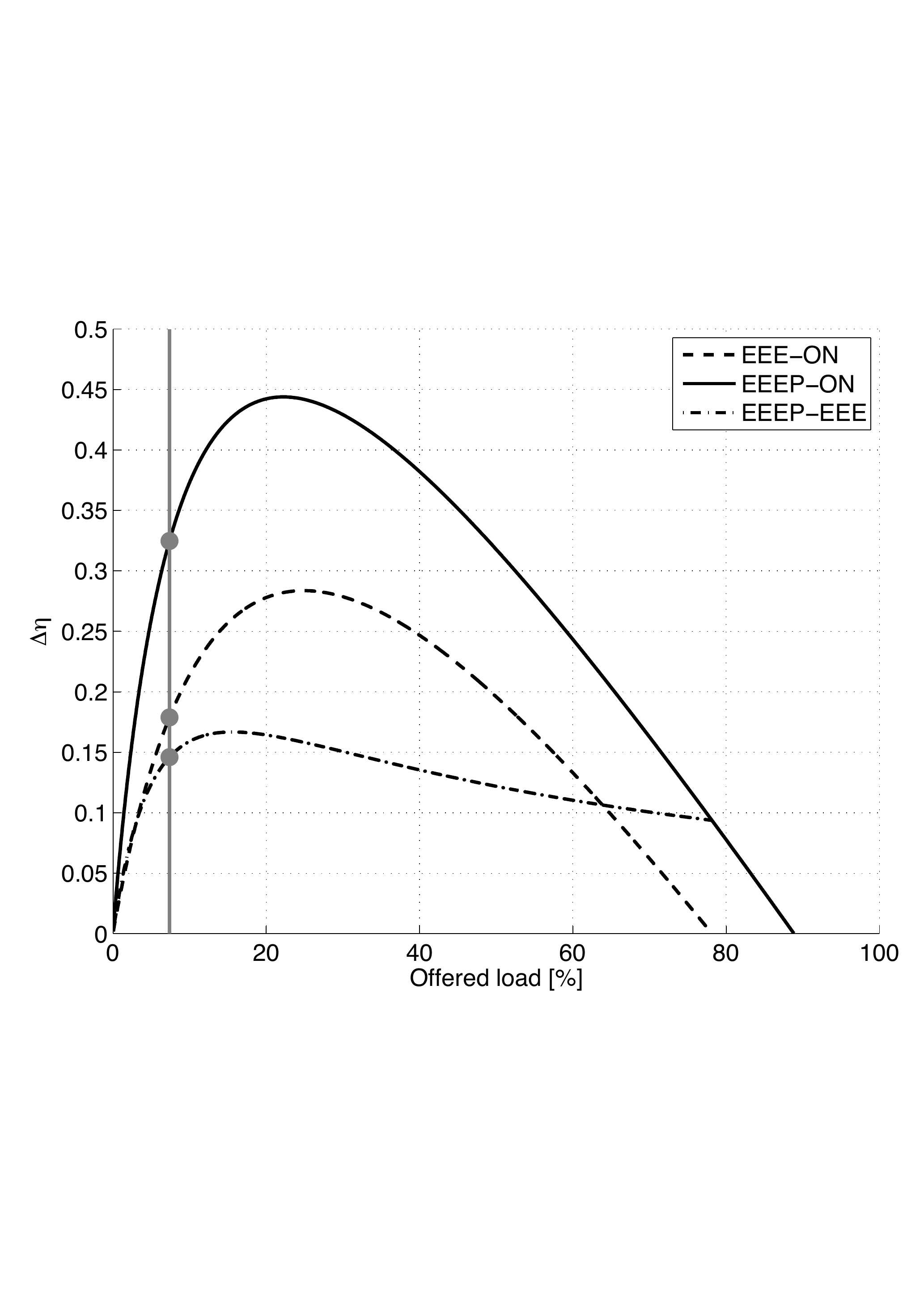} \label{fig:efficiency_scalingTload_B}}
\vspace{-0.0cm}
\caption{Efficiency comparison of single strategies. The plots are shown as a function of the network load. The two dashed lines refer to the single EEE and EEEP strategies, whereas the solid line corresponds to the Always-On policy. The vertical gray line refers specifically to the trace Real \#1.}
\label{fig:efficiency_scalingTload}
\end{figure*}

\subsection{Multi strategy analysis}

In actual fact, the proposed algorithm combines the two energy efficient strategies in order to reach at least the EEE performance and exploiting EEEP when the prediction step suggests its convenience: therefore, the efficiency index for a real case should belong to the region between the two curves referring respectively to the EEE (lower bound) and to the EEEP (upper bound), for a specific packet size. 

More formally, the algorithm applies an energy efficient policy during $K_{win}= \left \lfloor{\frac{L}{T}}\right \rfloor $ time windows, where $L$ is the duration of the considered interval of traffic, choosing at each window either standard EEE or EEEP. Let $U$ be the percentage of blocks in which EEEP strategy is used: hence, recalling \eqref{eq:pEEE1}-\eqref{eq:pEEEPreal}, the fraction of time $p_U$ when the link is in quiet state results as
\begin{align}
p_U &=U p_{EEEP} + (1 - U) p_{EEE} \\
                  &= U (p_{EEEP} - p_{EEE})  + p_{EEE}
                  \label{eq:pReal2}
\end{align}
Clearly, $p_U\in \left[ p_{EEE}, p_{EEEP}\right]$ according to the value of $U$ that depends on the intrinsic characteristic of the traffic, being influenced by how the traffic is arranged within the window and by the degree of self-similarity $H$: in this respect, trivially, if $H$ is high, $U$ will be high and vice versa.
 
Similarly, a convex combination of $\eta_{EEE}$ and $\eta_{EEEP}$ regulated by $U$ stands also for the efficiency $\eta_U$:
\begin{equation}
   \eta_U =  U(\eta_{EEEP} - \eta_{EEE}) + \eta_{EEE}
                  \label{eq:etaReal2}
\end{equation}
 
Now, let $PW_{ON}$ and $PW_{OFF}$ be the electric power (in Watt) used by the link when it is respectively in active (also during transitions) and quiet state. Generally speaking, let $p_\ast$ the percentage of time during which the link is in quiet state (i.e. $p_\ast$ can be $p_{EEE}$, $p_{EEEP}$ or $p_U$). The energy $E_\ast$ spent by the link in $K_{win}T\approx L$ time is given by:
\begin{align}
E_\ast &= K_{win}T \left(p_\ast PW_{OFF} + \left(1 - p_\ast\right) PW_{ON}\right)\\ 
  &= K_{win}T \left ( p_\ast \left(PW_{OFF} - PW_{ON}\right) + PW_{ON} \right).
  \label{eq:energy}
\end{align}  

From a practical perspective, relations \eqref{eq:pReal2} allows to write the percentage gain $TG$ with respect to EEE in terms of (sleep) time as:
\begin{align}
TG & = \frac{p_U - p_{EEE}}{p_{EEE}} \label{eq:TG1}\\
      & = U \frac{p_{EEEP} - p_{EEE}}{p_{EEE}}.
\label{eq:TG}
\end{align}

Similarly, from \eqref{eq:energy} it can be estimated the percentage gain of energy $EG$ that the algorithm based on the traffic prediction brings with respect to the standard EEE strategy\footnote{Note the change of sign between the two formulas \eqref{eq:TG1} and \eqref{eq:EG1} since they represent respectively an increase of the time spent in the sleep state ($p_u>p_{EEE}$) and a decrease of the used energy ($E_u<E_{EEE}$).}:
\begin{align}
EG &= \frac{E_{EEE} - E_U}{E_{EEE}} \label{eq:EG1}\\
   &= \frac{p_{EEE} - p_U}{p_{EEE} + \frac{PW_{ON}}{PW_{OFF} - PW_{ON}}}\\
   &= U \frac{p_{EEEP} - p_{EEE}}{\frac{PW_{ON}}{PW_{ON} - PW_{OFF}} - p_{EEE}}\\
   \label{eq:eg2}
\end{align}

Substituting in \eqref{eq:eg2} the expressions of $p_{EEE}$ and $p_{EEEP}$ provided by \eqref{eq:pEEE1} and \eqref{eq:pEEEPreal} respectively, it results
\begin{align}
EG &= \frac{U}{T} \frac{T_{trans} \frac{T - T^\prime - T_B}{T_B} + \bar{N}T_{pack} \frac{T - T^\prime}{T_B} - (\bar{\tau} + \Delta \bar{\tau})}{\frac{PW_{ON}} {PW_{ON} - PW_{OFF}} - p_{EEE}}
   \label{eq:egfinal}
\end{align}

With the position:
\begin{equation}
\begin{array}{l} 
	X=\frac{T_{trans}}{T} \left(\frac{T - T^\prime}{T_B}-1\right) + \frac{\bar{N}T_{pack}}{T} \frac{T - T^\prime}{T_B} 
\end{array}
\nonumber
\end{equation}
and expressing $\Delta \bar{\tau}$ as an additional percent fraction $p_{\bar{\tau}}$ of $\bar{\tau}$ (i.e. $\bar{\tau}+\Delta \bar{\tau}=(1+p_{\bar{\tau}})\bar{\tau}$), the energy gain $EG$ can be written as:
\begin{equation}
EG= \frac{U}{\frac{PW_{ON}} {PW_{ON} - PW_{OFF}} - p_{EEE}}\left(X - \frac{\bar{\tau}}{T} - \frac{\bar{\tau}}{T}p_{\bar{\tau}}\right).
\label{eq:eg3}
\end{equation}
This equation represents a theoretical expression for the energy gain as a function of the parameters of the device in use ($PW_{ON}$, $PW_{OFF}$) and of the control variables of the energy efficient strategies: in particular, in \eqref{eq:eg3} the dependence on $U$ is clearly stated, but, more interestingly, a linear dependence on $p_{\bar{\tau}}$ is highlighted.

The parameter $p_{\bar{\tau}}$ (or equivalently $\Delta \bar{\tau}$) has already been introduced in \S\ref{subsec:EEEP} as a control variable to regulate the aggressiveness of the prediction strategy.
In the experimental section it will be discussed how tuning this quantity may be effectively exploited to achieve a good trade-off between the energy gain and the number of delayed packets 

\section{Performance Evaluation}
\label{sec:performance-evaluation}

In this section, the assessment of the algorithm performance and the validation of the theoretical bounds are provided by means of both the synthetically generated traces and the real ones (see \S\ref{sec:synth&real}).

As a first result, Tab.~\ref{tab:results2} presents the comparison between the theoretical expected results, as obtained by the formulas of the previous section, and those achieved by simulating the energy efficient strategies with the five real traces for a total length of $L=200$ s each, $T = 100$ ms and $T^\prime = T/2$. The duration of the burst unit has been chosen in agreement with the analysis carried out in \cite{ReviriegoMaestroHernandez10} as $T_B=1$ ms. 

From the values reported in Tab.~\ref{tab:results2}, it is remarkable how the simulation results are in very good agreement with those derived from the theoretical analysis provided in \S\ref{discussion}. 

In the presented results, the policy performance in terms of time (i.e. $p_\ast$) is computed as the ratio between the duration of the intervals in which the link is in quiet state and the overall simulation length $L$; similarly, for the energy results (i.e. $E_\ast$) the figures of merit are obtained by calculating the actual energy used during the whole simulation period $L$.
As for the energy parameters of the device, the link power consumptions are assumed as $0.697$ W and $0.053$ W for the active state (also during transitions) and quiet state respectively \cite{intel}. Using the Always-On policy where the link is always active for all the 200 seconds, the total spent energy is equal to $139.4$~J. 

For instance, consider the first trace Real \#1: it has been mentioned in the single window analysis (see \S\ref{sec:singleWindowAnalysis}) that by employing the EEEP strategy a gain of around $15\%$ is predicted in terms of increase of time spent in quiet state with respect to EEE; the actual theoretical gain can be derived from \eqref{eq:TG} multiplying this single window gain by $U$, thus obtaining approximately $13\%$, which corresponds to the increase of $p_\ast$ from $p_{EEE}=70.6\%$ to $p_{U}=79.5\%$. At the end of the simulation interval, the results show that EEEP strategy used about $23 \%$ of energy less than EEE.

It can be observed that EEEP strategy is indeed beneficial with respect to EEE policy even for the cases in which self-similarity is not very high such as those represented by both traces Real \#2 and Real \#3. In conclusion, EEEP strategy is definitely able to further increase the energy savings brought by EEE strategy.

\begin{table}[h]
\caption{Energy and time gains: theoretical results vs. simulation results for traces Real \#1--\#5.}
\label{tab:results2}
\centering
\begin{tabular}{|c|c|c|}
\hline 
Case & Theoretical results & Simulation results\\
\hline 
\hline  
Real \#1          & $p_{EEE} = 70.6 \%$     & $p_{EEE} = 70.4 \%$\\
$( U = 82.7\% )$  & $E_{EEE} = 48.4 $ J     & $E_{EEE} = 48.7 $ J\\ \cline{2-3}
                  & $\bar{\tau} = 3.8$ ms   & $\bar{\tau} = 4.6$ ms\\         
                  & $p_{U} = 79.5 \%$    & $p_{U} = 78.9 \%$ \\
                  & $E_{U} = 37.0 $ J    & $E_{U} = 37.8 $ J\\ \cline{2-3}
                  & $EG = 23.5 \%$ & $EG = 22.4 \%$\\                       
\hline
\hline
Real \#2          & $p_{EEE} = 71.2 \%$     & $p_{EEE} = 70.8 \%$\\
$( U = 23.5\% )$  & $E_{EEE} = 47.6 $ J     & $E_{EEE} = 48.2 $ J \\  \cline{2-3}
                  & $\bar{\tau} = 3.5$ ms   & $\bar{\tau} = 4.3$ ms\\         
                  & $p_{U} = 73.8 \%$    & $p_{U} = 73.2 \%$ \\
                  & $E_{U} = 44.4 $ J    & $E_{U} = 45.1 $ J\\  \cline{2-3}
                  & $EG = 6.8 \%$ & $EG = 6.5 \%$\\                       
\hline
\hline
Real \#3          & $p_{EEE} = 70.8 \%$     & $p_{EEE} = 70.6 \%$\\
$( U = 45.3\% )$  & $E_{EEE} = 48.2 $ J     & $E_{EEE} = 48.4 $ J \\ \cline{2-3}
                  & $\bar{\tau} = 3.7$ ms   & $\bar{\tau} = 4.4$ ms\\         
                  & $p_{U} = 75.7 \%$    & $p_{U} = 75.3 \%$ \\
                  & $E_{U} = 41.9 $ J    & $E_{U} = 42.5 $ J\\  \cline{2-3}
                  & $EG = 13.0 \%$ & $EG = 12.3 \%$\\                       
\hline
\hline
Real \#4          & $p_{EEE} = 71.9 \%$     & $p_{EEE} = 71.9 \%$\\
$( U = 79.2\% )$  & $E_{EEE} = 46.8 $ J     & $E_{EEE} = 46.8 $ J \\  \cline{2-3} 
                  & $\bar{\tau} = 3.1$ ms   & $\bar{\tau} = 3.8$ ms\\        
                  & $p_{U} = 80.3 \%$    & $p_{U} = 80.0 \%$ \\
                  & $E_{U} = 35.9 $ J    & $E_{U} = 36.4 $ J\\   \cline{2-3}
                  & $EG = 23.3 \%$ & $EG = 22.4 \%$\\                       
\hline
\hline
Real \#5          & $p_{EEE} = 72.4 \%$     & $p_{EEE} = 71.8 \%$\\
$( U = 75.4\% )$  & $E_{EEE} = 46.2 $ J     & $E_{EEE} = 46.9 $ J \\  \cline{2-3}
                  & $\bar{\tau} = 2.9$ ms   & $\bar{\tau} = 3.9$ ms\\         
                  & $p_{U} = 80.5 \%$    & $p_{U} = 79.6 \%$ \\
                  & $E_{U} = 35.8 $ J    & $E_{U} = 36.9 $ J\\  \cline{2-3}
                  & $EG = 22.5 \%$ & $EG = 21.2 \%$\\                       
\hline              
\end{tabular}
\end{table}
In Fig.~\ref{fig:realEta} the actual efficiency value $\eta_U$ of the traces is reported. Specifically, as for Fig.~\ref{fig:efficiency_scalingTload}, dotted curves refer to maximum and minimum theoretical bounds, derived in \S\ref{discussion}, whereas spare dots account for the simulated  $\eta_U$ values of the considered traces. In detail, black dots account for real traces whereas gray ones report the efficiency of synthetic traces. It can be appreciated how the lower and upper bounds are indeed verified for all the considered traces. Interestingly, with refer to the detailed view of Fig.~\ref{fig:realEta_zoom}, the set of Synth \#A1--\#A10 (diamond and square markers) confirm how the EEEP efficiency basically increases by increasing the self-similarity level of the traces. 
\begin{figure*}[ht!]
\centering
\subfigure[Full scale plot]{\includegraphics[ width =0.43\textwidth, trim = 0.3cm 7cm 0.3cm 7cm,clip]{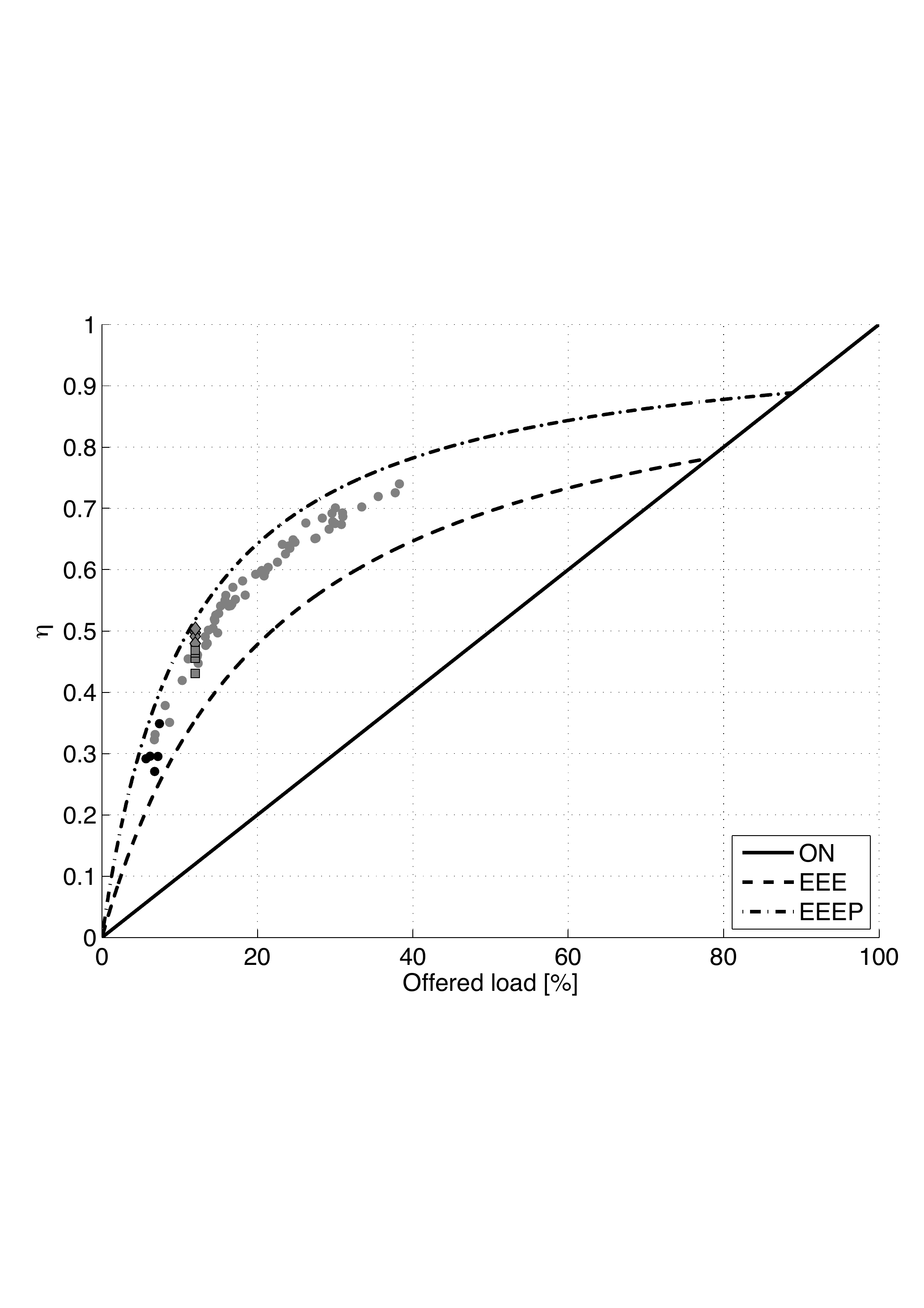} \label{fig:realEta_full}}
\hspace{1cm}
\subfigure[Detailed view]{\includegraphics[ width =0.43\textwidth, trim = 0.3cm 7cm 0.3cm 7cm,clip]{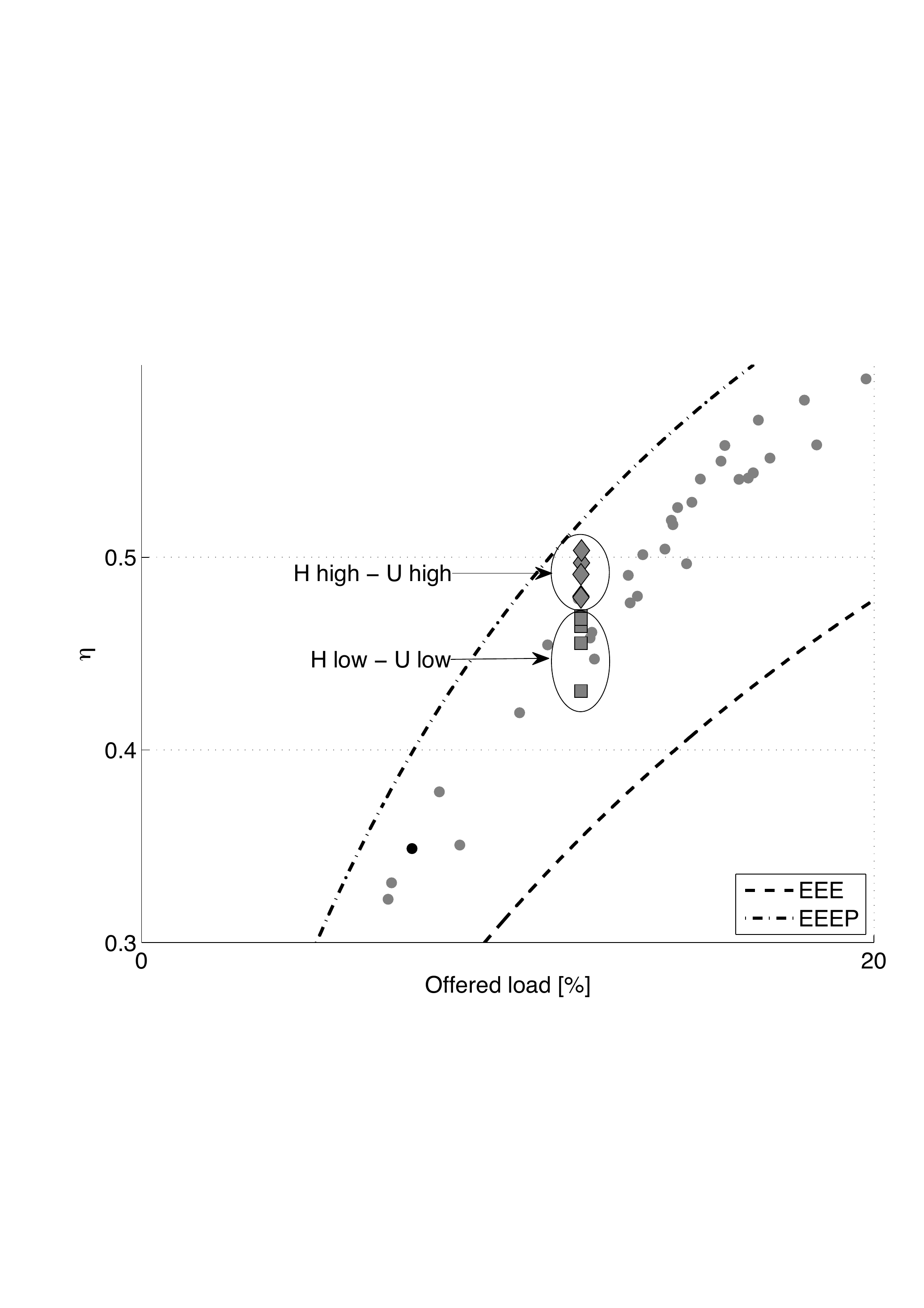} \label{fig:realEta_zoom}}
\vspace{-0.0cm}
\caption{Efficiency behavior for real and synthetic traces. Curves account for the theoretical limits, whereas dots refer to single traces.}
\label{fig:realEta}
\end{figure*}

A further analysis is then presented with reference to the first trace of Tab.~\ref{tab:results2}  (Real \#1): 
hereafter it is discussed how the aggressiveness of the strategy can be controlled through the parameter $p_{\tau}$. As suggested in the algorithm description of \S\ref{subsec:EEEP}, this poses a trade off between the amount of saved energy and the maximum delay that may affect the frame delivery. 
In Tab.~\ref{tab:results}, the energy gain $EG$ is put in relation with the number of time windows in which frames are not further delayed (i.e. they are not moved to the beginning of the following window), for different choices of $p_{\tau}$.
The performance figures $EG$ are expressed with respect to the EEE strategy. 
It is interesting that the predicted time ${\tau}$ (one for each window) is a small fraction of the available window time $T$ (around $5\%$) and guarantees the transmission of $80.6\%$ of the packets with no delay. 
It is even more remarkable how this percentage rapidly grows to $100\%$ by increasing $p_{{\tau}}$, at the cost of a limited decrease of the energy gain.
\begin{table}[h]
\caption{Simulation of trace Real \#1. Effects of a $p_\tau$ variation. }
\label{tab:results}
\begin{center}
\begin{tabular}{|c|c|c|c|}
\hline 
$p_{\bar{\tau}}$ & Non-delayed & Theoretical & Simulation  \\
                            & windows &  $EG$ &   $EG$ \\
\hline
+ 0 $\%$& 80.6 $\%$ & 23.5 $\%$& 22.4 $\%$\\
+ 10 $\%$& 80.9 $\%$& 22.7 $\%$& 22.2 $\%$\\
+ 20 $\%$& 98.1 $\%$& 21.9 $\%$& 20.3 $\%$\\
+ 30 $\%$& 98.1 $\%$& 21.1 $\%$& 20.1 $\%$\\
+ 40 $\%$& 98.8 $\%$& 20.2 $\%$& 19.0 $\%$\\
+ 50 $\%$& 99.6 $\%$& 19.4 $\%$& 18.0 $\%$\\
+ 60 $\%$& 99.9 $\%$& 18.6 $\%$& 17.3 $\%$\\
+ 70 $\%$& 100.0 $\%$& 17.8 $\%$& 16.3 $\%$\\
+ 80 $\%$& 100.0 $\%$& 16.9 $\%$& 15.4 $\%$\\
\hline
\end{tabular}
\end{center}
\end{table}

As a matter of fact, as expected from \eqref{eq:eg3}, the (slow) decrease in the energy gain is linear with the (strong) increase of $p_\tau$, but conversely, it can be observed that the number of windows that do not incur in transmission delays, grows approximately hyperbolically.
This behavior is shown in Fig.~\ref{fig:scalingDtau}, where the performance trade off between the energy gain and the data delay is given for different values of the control parameter $p_\tau$. In the figure, the five traces Real \#1--\#5 are considered and the value $U$ is also indicated.
\begin{figure}[ht]
\centering
\includegraphics[width=0.48\textwidth, trim = 0.5cm 7.0cm 0.5cm 7.0cm]{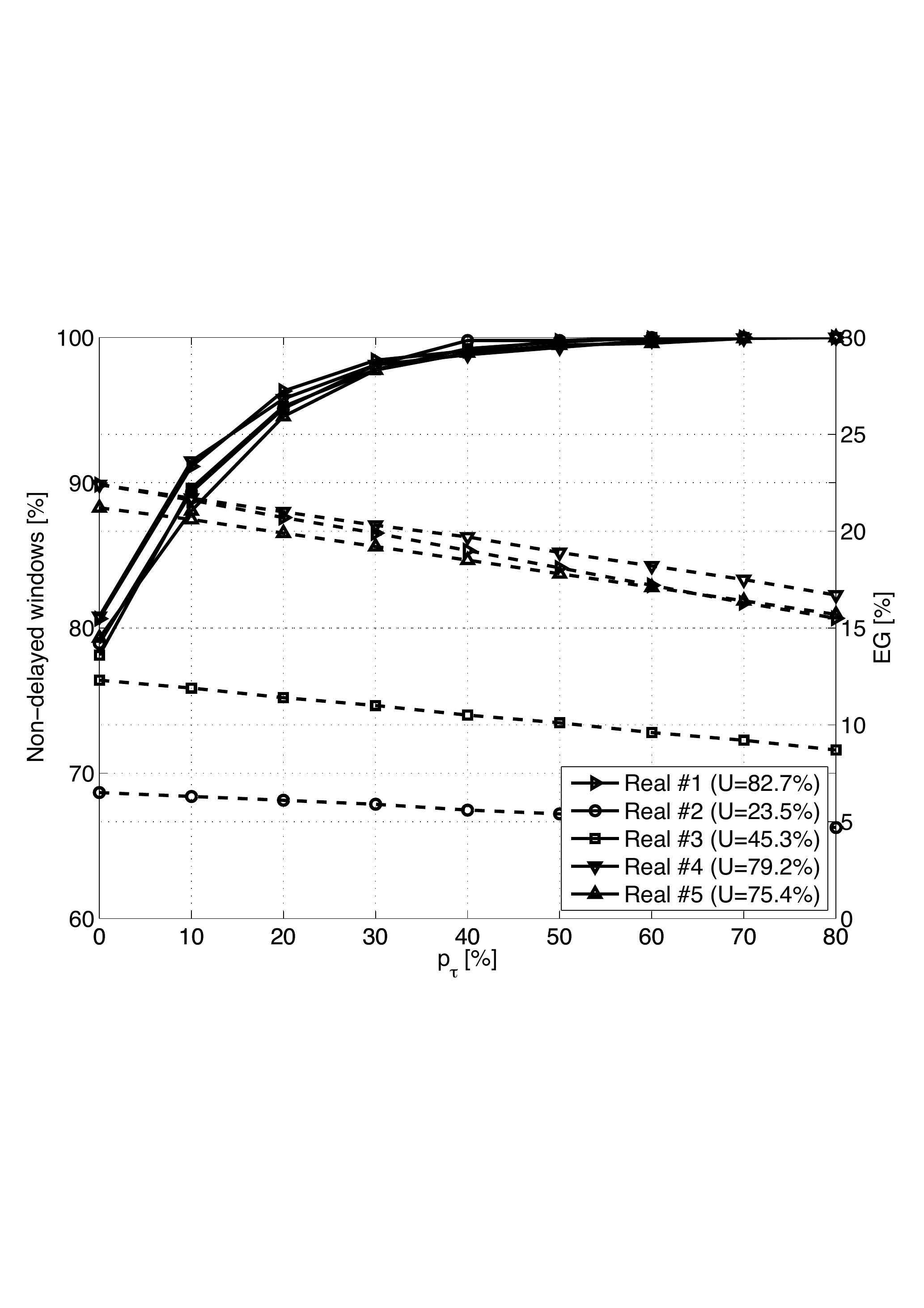}
\caption{Aggressiveness of the strategy. The solid and the dashed lines correspond respectively to the percentage of non-delayed windows and to the energy gain; the different traces are indicated with color code. The plots are obtained from the simulation of the real traces. } 
\label{fig:scalingDtau}
\end{figure}

\section{Conclusions}
\label{sec:conclusions}

In this work, an innovative energy efficient strategy for Ethernet networks based on traffic prediction and shaping has been introduced that exploits the self-similarity of Ethernet traffic. The strategy, that has been referred to as EEEP, can be used in conjunction with those traditionally adopted, such as frame and burst transmission, to boost the overall performance in terms of energy savings. 

Both theoretical and simulation analyses, that made use of real as well as synthetic traffic traces, have been proposed in the paper, with reference to a network configuration based on a single switch that receives data from some input links, and transmits on an outgoing link. 

The obtained results demonstrate that EEEP allows to significantly increase the energy savings usually achieved by the other strategies, at the expense of only limited delays in packet delivery. Particularly, a detailed comparison show that EEEP is able to save on average up to about 23\% of energy more than the burst transmission strategy.   

Several future activities can be envisaged in this context. The first one is concerned with the adoption of the proposed strategy by Ethernet switches, that reflects on the actual implementation of Algorithm~\ref{alg:EEEP}. This requires, basically, that the outgoing link can be activated/deactivated in an independent way by the routine(s) that implement the algorithm. Consequently, such an option, has to be made available on commercial devices.

Secondly, a more general scenario may be considered as that of a network encompassing several devices (switches) able to apply the EEEP strategy. In this context it would be interesting to investigate whether smart cooperative policies can be devised that allow the whole network to reach better global energy saving performance, through local interaction among neighboring switches. In practice, such policies should be able to determine, through local exchange of information on the actual traffic loads, the switches that should adopt EEEP in order to achieve the best performance. 

\bibliographystyle{IEEEtran}
\bibliography{biblio_v2}

\end{document}